\renewcommand{\paragraph}{\roman{paragraph}}
\newcommand{\Z}{\mathbb{Z}}
\newcommand{\C}{\mathcal{C}}
\newcommand{\K}{\mathcal{K}}
\newcommand{\G}{\mathcal{G}}
\newcommand{\vv}{\mathbf{v}}
\newcommand{\x}{\mathbf{x}}
\newcommand{\y}{\mathbf{y}}
\newtheorem{theorem}{Theorem}
\newtheorem{exmpl}[theorem]{Example}
\newtheorem{lemma}[theorem]{Lemma}
\newtheorem{proposition}[theorem]{Proposition}
\theoremstyle{definition}
\newtheorem{remark}[theorem]{Remark}
\newcommand{\be}{\begin{eqnarray}}
\newcommand{\ee}{\end{eqnarray}}
\begin{document}

\title{$\mathbb{Z}_p\mathbb{Z}_{p^2}$-linear codes: rank and kernel%
\thanks{This research is supported by the National Natural Science Foundation of China (12071001) and the Excellent Youth Foundation of Natural Science Foundation of Anhui Province (1808085J20).}}
\author{ Minjia Shi\thanks{smjwcl.good@163.com}, Shukai Wang\thanks{wangshukai\_2017@163.com}, Xiaoxiao Li\thanks{ahulixiaoxiao@163.com},
\thanks{Key Laboratory of Intelligent Computing
and Signal Processing of Ministry of
Education, School of Mathematics Sciences, Anhui
University, Hefei, Anhui, China.}}
\date{}
\maketitle
\begin{abstract}
{A code $C$ is called $\Z_p\Z_{p^2}$-linear if it is the Gray image of a $\Z_p\Z_{p^2}$-additive code, where $p>2$ is prime.
In this paper, the rank and the dimension of the kernel of $\Z_p\Z_{p^2}$-linear codes are studied.
Two bounds of the rank of a $\Z_3\Z_{9}$-linear code and the dimension of the kernel of a $\Z_p\Z_{p^2}$-linear code are given, respectively.
For each value of these bounds, we give detailed construction of the corresponding code.
Finally, pairs of rank and the dimension of the kernel of $\Z_3\Z_{9}$-linear codes are also considered.
}
\end{abstract}

{\bf Keywords:} $\mathbb{Z}_p\mathbb{Z}_{p^2}$-linear codes, $\Z_3\Z_{9}$-linear codes,
$\mathbb{Z}_p\mathbb{Z}_{p^2}$-additive codes, kernel, rank.
\section{Introduction}

Let $\mathbb{Z}_p$ and $\mathbb{Z}_{p^2}$ be the ring of integers modulo $p$ and $p^2$, respectively, where $p>2$ is prime.
For a positive integer $m$, $\Z_p^m$ and $\Z_{p^2}^m$ are the ring extension.
The linear codes over $\Z_p$ and $\Z_{p^2}$ are subgroups of $\Z_p^m$ and $\Z_{p^2}^m$, respectively.

The classical Gray map from $\Z_{p^{k+1}}$ to $\Z_p^{p^k}$, where $p$ is a prime and $k\geq 1$, is given in \cite{LS}.
Here, let $k=1$, the Gray map $\phi$ from $\Z_{p^2}$ to $\Z_p^{p}$ is as follows.

$$\begin{array}{ccc}
\Z_{p^2} & \longrightarrow & \Z_p^p\\
\theta & \longmapsto & \phi(\theta)
\end{array},$$
$\phi(\theta)=\theta''(1,1,\ldots,1)+ \theta' (0,1,\ldots,p-1)$, where
$\theta = \theta'' p + \theta',$ and $\theta',\theta''\in\{0,\ldots,p-1\}.$

The homogeneous weight  \cite{LS} $wt_{hom}$ on $\Z_{p^2}$ is $$
wt_{hom}(x)=\left\{
\begin{array}{ll}
p,& \text{ if } x\in p\Z_{p^2}\backslash\{0\}, \\
p-1,& \text{ if } x\notin p\Z_{p^2}, \\
0,& \text{ if } x=0.
\end{array}
\right.
$$
And the homogeneous distance is $d_{hom}=wt_{hom}(\mathbf{a}-\mathbf{b})$, where $\mathbf{a},\mathbf{b}\in \Z_{p^2}^n$.
Then, $\phi$ is an isometry from $(\Z_{p^2}^n, d_{hom})$ to $(\Z_p^{pn}, d_H)$, where $d_H$ is the Hamming distance on $\Z_p^{p}$.

If $\mathcal{C}$ is a linear code over $\mathbb{Z}_{p^2}$, then the code $C=\phi(\mathcal{C})$ is called a \emph{$\mathbb{Z}_{p^2}$-linear} code.
The dual of a linear code $\C$ of length $n$ over $\mathbb{Z}_{p^2}$, denoted by $\C^{\bot}$, is defined as
$$\mathcal{C}^\bot=\{\mathbf{x}\in \Z_{p^2}^n:\langle\mathbf{x},\mathbf{y}\rangle=0 ~\rm {for ~all}~ \textbf{y}\in \mathcal{C}\},$$
where $\langle,\rangle$ denotes the usual Euclidean inner product.
The code $C_\bot=\phi(\C^{\bot})$ is called the \emph{$\mathbb{Z}_{p^2}$-dual code} of $C=\phi(\C)$.

In 1973, Delsarte first defined the additive codes in terms of association schemes \cite{DP}, it is a subgroup of the underlying abelian group.
Borges et al. \cite{BFPR10} studied the standard generator matrix and the duality of $\Z_2\Z_4$-additive codes. Since then, a lot of work has been devoted to characterizing $\Z_2\Z_4$-additive codes. Dougherty et al. \cite{DLY16} constructed one weight $\Z_2\Z_4$-additive codes and analyzed their parameters. Benbelkacem et al. \cite{BBDF20} studied $\Z_2\Z_4$-additive complementary dual codes and their Gray images. In fact, these codes can be viewed as a generalization of the linear complementary dual (LCD for short) codes \cite{M92} over finite fields. Joaquin et al. \cite{BBCM} introduced a decoding method of the $\Z_2\Z_4$-linear codes.
More structure properties of $\Z_2\Z_4$-additive codes can be found in \cite{BBDF11,JTCR2}. Moreover, the additive codes over different mixed alphabet have also been intensely studied, for example $\Z_2\Z_2[u]$-additive codes \cite{BC2}, $\Z_2\Z_{2^s}$-additive codes \cite{AS13}, $\Z_{p^r}\Z_{p^s}$-additive codes \cite{AS15}, $\Z_2\Z_2[u,v]$-additive codes \cite{SW}, $\Z_p\Z_{p^k}$-additive codes \cite{SWD} and $\Z_p(\Z_p+u\Z_p)$-additive codes \cite{WS}, and so on. It is worth mentioning that $\Z_2\Z_4$-additive cyclic codes form an important family of $\Z_2\Z_4$-additive codes, many optimal binary codes can be obtained from the images of this family of codes. More details of $\Z_2\Z_4$-additive cyclic codes can be found in \cite{BC,JCR,JTCR,JTCR2,YZ20}.

In \cite{CFC}, Fern\'{a}ndez et al. studied the rank and kernel of $\Z_2\Z_4$-linear codes, where $\Z_2\Z_4$-linear codes are $\Z_2\Z_4$-additive codes under the generalized Grap map $\Phi$.
The authors also studied the rank and kernel of $\Z_2\Z_4$-additive cyclic codes in \cite{JTCR2}.
It is a natural problem: can we study the rank and kernel of $\Z_p\Z_{p^2}$-linear codes in terms of $\Z_p\Z_{p^2}$-additive codes?
In this paper, we succeed in considering this general case for the kernel, and we also generalize the rank over $\Z_3\Z_9$.

Let $\C$ be a $\Z_p\Z_{p^2}$-additive code of length $\alpha+\beta$ defined by a subgroup of $\Z_p^\alpha\times\Z_{p^2}^\beta$. Let $n=\alpha+p\beta$ and
$\Phi:\Z_p^\alpha\times\Z_{p^2}^\beta\rightarrow \Z_p^n$ is an extension of the Gray map, which
$$\Phi(\x,\y)=(\x,\phi(y_1),\ldots,\phi(y_\beta)),$$
for any $\x\in \Z_p^\alpha$, and $\y=(y_1,\ldots,y_\beta)\in \Z_{p^2}^\beta$. The code $C=\Phi(\C)$ is called a $\Z_p\Z_{p^2}$\emph{-linear code}.
There are two parameters of this nonlinear code: the rank and the dimension of the kernel.
We denote $\langle C\rangle$ the linear span of the codewords of $C$. The dimension of $\langle C\rangle$ is called the \emph{rank} of the code $C$, denoted by $rank(C)$.
The \emph{kernel} of the code $C$, which is denoted as $K(C)$, is defined as:
$$K(C)=\{\x\in\Z_p^{n}|C+\x=C\},$$
where $C+\x$ means $\x$ adds all codewords in $C$.
We will denote the dimension of the kernel of $C$ by \emph{$ker(C)$}.

The rank and dimension of the kernel have been studied for some families of $\Z_2\Z_4$-linear codes \cite{BPR,FPV,PRV}.
These two parameters are helpful to the classification of $\Z_2\Z_4$-linear codes.
Therefore, we try to generalize them to the more general case over $\Z_p\Z_{p^2}$.

\par
The paper is organized as follows.
In Section \ref{sec:2}, we give some properties about $\mathbb{Z}_p\mathbb{Z}_{p^2}$-additive codes and $\mathbb{Z}_p\mathbb{Z}_{p^2}$-linear codes.
In Section \ref{sec:3}, we find all values of the rank for $\mathbb{Z}_3\mathbb{Z}_{9}$-linear codes, and we construct a $\mathbb{Z}_3\mathbb{Z}_{9}$-linear code for each value.
In Section \ref{sec:4}, we determine all values of the dimension of the kernel for $\mathbb{Z}_p\mathbb{Z}_{p^2}$-linear codes.
We also construct all $\mathbb{Z}_p\mathbb{Z}_{p^2}$-linear codes for the values of the dimension of the kernel.
In Section \ref{sec:5}, pairs of rank and the dimension of kernel of $\mathbb{Z}_3\mathbb{Z}_{9}$-additive codes are studied.
For each fixed value of the dimension of the kernel, the range of rank is given.
Moreover, the construction method of the $\mathbb{Z}_3\mathbb{Z}_{9}$-linear codes with pairs of rank and the dimension of the kernel is provided.
In Section \ref{sec:6}, we conclude the paper.

\section{Preliminaries} \label{sec:2}

Let $\mathcal{C}$ be a $\mathbb{Z}_p\mathbb{Z}_{p^2}$-additive code. Then $\mathcal{C}$ is isomorphic to an abelian structure $\mathbb{Z}_p^\gamma \times \mathbb{Z}_{p^2}^\delta$ since it is a subgroup of $\mathbb{Z}_p^\alpha \times \mathbb{Z}_{p^2}^\beta$.
The \emph{order} of a codeword $\mathbf{c}$ means the minimal positive integer $a$ such that $a\cdot\mathbf{c}=\mathbf{0}$.
Therefore, the size of $\mathcal{C}$ is $p^{\gamma+2\delta}$, and the number of codewords of order $p$ in $\mathcal{C}$ is $p^{\gamma+\delta}$.

Let $X$ be the set of $\mathbb{Z}_p$ coordinate positions, and $Y$ be the set of $\mathbb{Z}_{p^2}$ coordinate positions, so $|X|=\alpha$ and $|Y|=\beta$.
In general, it is notice that the first $\alpha$ positions corresponds to the set $X$ and the last $\beta$ positions corresponds to the set $Y$.
We denote $\mathcal{C}_X$ and $\mathcal{C}_Y$ are the punctured codes of $\mathcal{C}$ by deleting the coordinates outside $X$ and $Y$, respectively.
Let $\mathcal{C}_p$ be the subcode consisting of all codewords of order $p$ in $\mathcal{C}$. Let $\kappa$ be the dimension of the linear code $(\mathcal{C}_p)_X$ over $\Z_p$.
If $\alpha=0$, then $\kappa=0$.
Considering all these parameters, we will say that $\mathcal{C}$ or $C=\Phi(\mathcal{C})$ is of type $(\alpha, \beta; \gamma, \delta; \kappa)$.
\par
For a $\mathbb{Z}_p\mathbb{Z}_{p^2}$-additive code, every codeword $\mathbf{c}$ can be uniquely expressible in the form
$$\mathbf{c}=\sum_{i=1}^\gamma \lambda_i\mathbf{u}_i+\sum_{j=1}^\delta \nu_j\mathbf{v}_j,$$
where $\lambda_i \in \mathbb{Z}_p$ for $1\leq i \leq \gamma$, $\nu_j \in \mathbb{Z}_{p^2}$ for $1 \leq j \leq \delta$ and $\mathbf{u}_i$, $\mathbf{v}_j$ are vectors in $\mathbb{Z}_p^\alpha \times \mathbb{Z}_{p^2}^\beta$ of order $p$ and $p^2$, respectively.
Then, we get the generator matrix $\mathcal{G}$ for the code $\mathcal{C}$ by vectors $\mathbf{u}_i$ and $\mathbf{v}_j$.
$$\mathcal{G}=\left( \begin{array}{c}\mathbf{u}_1\\ \vdots \\\mathbf{u}_\gamma \\ \mathbf{v}_1 \\\vdots \\\mathbf{v}_\delta \end{array}\right)=\left( \begin{array}{c|c}B_1 & pB_3 \\
 \hline B_2 & Q\end{array}\right),$$
where $B_1$, $B_2$ and $B_3$ are matrices over $\mathbb{Z}_p$ of size $\gamma \times \alpha, \delta \times \alpha$ and $\gamma \times \beta$, respectively; and $Q$ is a matrix over $\mathbb{Z}_{p^2}$ of size $\delta\times \beta$.

Recall that in coding theory, two linear codes $C_1$ and $C_2$ of length $n$ are \emph{permutation equivalent}
if there exists a coordinate permutation $\pi$ such that $C_2=\{\pi(c)|c\in C_1\}$.
The permutation equivalent of $\Z_p\Z_{p^2}$-additive codes can be defined similarly.
Then we have the following theorem.

\begin{theorem}{\rm\cite{AS15}} \label{th:1}
Let $\mathcal{C}$ be a $\mathbb{Z}_p\mathbb{Z}_{p^2}$-additive code of type $(\alpha,\beta;\gamma,\delta;\kappa)$. Then, $\mathcal{C}$ is permutation equivalent to a $\mathbb{Z}_p\mathbb{Z}_{p^2}$-additive code $\mathcal{C}^\prime$ with generator matrix of the form:
$$\mathcal{G}_S=\left(\begin{array}{cc|ccc}I_\kappa & T' & pT_2 & \mathbf{0} & \mathbf{0}\\
\mathbf{0} & \mathbf{0} & pT_1 & pI_{\gamma-\kappa} & \mathbf{0}\\
\hline
\mathbf{0} & S' & S & R & I_\delta
\end{array}\right),$$
where $I_\delta$ is the identity matrix of size $\delta\times \delta$; $T', S', T_1, T_2$ and $R$ are matrices over $\mathbb{Z}_p$; and $S$ is a matrix over $\mathbb{Z}_{p^2}$.
\end{theorem}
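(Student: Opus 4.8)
**The plan is to put the generator matrix $\mathcal{G}$ into the stated standard form $\mathcal{G}_S$ by a sequence of admissible row and column operations, tracking carefully which operations are legal for a $\Z_p\Z_{p^2}$-additive code.**The plan is to transform the generator matrix $\mathcal{G}$ into the stated form $\mathcal{G}_S$ by a sequence of admissible operations, keeping careful track of which manipulations are legal for a $\Z_p\Z_{p^2}$-additive code. The legal operations are of two kinds. First, elementary row operations that preserve the subgroup $\mathcal{C}$ and the order of each generator: adding a $\Z_p$-multiple of a $\mathbf{u}_i$-row (order $p$) to any row, adding a $\Z_{p^2}$-multiple of a $\mathbf{v}_j$-row (order $p^2$) to another $\mathbf{v}$-row, adding a $\Z_p$-multiple of $p\mathbf{v}_j$ (which has order $p$) to a $\mathbf{u}_i$-row, and multiplying a $\mathbf{v}$-row by a unit of $\Z_{p^2}$. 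Second, permutations of the $\alpha$ coordinates in $X$ among themselves and of the $\beta$ coordinates in $Y$ among themselves; these yield a permutation-equivalent code while respecting the alphabet structure, hence preserving the type $(\alpha,\beta;\gamma,\delta;\kappa)$.

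First I would normalize the $\delta$ rows of order $p^2$. Reducing $Q$ modulo $p$, its rows are $\Z_p$-independent, so some $\delta\times\delta$ minor of $Q$ is a unit; a $Y$-column permutation brings this block to the last $\delta$ positions, and left-multiplying the $\mathbf{v}$-rows by its inverse produces the block $I_\delta$. Next, using these rows I clear the last $\delta$ $Y$-columns of the two $\mathbf{u}$-blocks by subtracting suitable $p$-multiples of the $\mathbf{v}$-rows, creating the two $\mathbf{0}$ blocks in the last column group. I then row-reduce the $X$-part $B_1$ of the $\mathbf{u}$-rows over the field $\Z_p$: since $(\mathcal{C}_p)_X$ is spanned by the rows of $B_1$ and has dimension $\kappa$, exactly $\kappa$ of these rows survive with independent $X$-projections, and an $X$-column permutation puts them in the form $[\,I_\kappa\mid T'\,]$ while the remaining $\gamma-\kappa$ $\mathbf{u}$-rows acquire zero $X$-part. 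Those $\gamma-\kappa$ rows now live entirely in $p\Z_{p^2}^\beta$ on the middle $Y$-columns and vanish on the last $\delta$ columns; reducing them over $\Z_p$ and permuting $Y$-columns (necessarily disjoint from the $I_\delta$ columns) yields the block $pI_{\gamma-\kappa}$. Finally I clean up: I subtract $\Z_p$-multiples of the $I_\kappa$-rows from the $\mathbf{v}$-rows to zero out their first $\kappa$ $X$-columns, leaving $S'$, and subtract multiples of the $pI_{\gamma-\kappa}$-rows from the top $\mathbf{u}$-rows to zero out their entries in those columns, leaving $pT_2$. Each such clearing touches only the intended columns, so the previously established identity blocks are undisturbed, and the residual blocks are exactly $T_1, T_2, S, R, S'$, giving $\mathcal{G}_S$.

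The step requiring the most care is verifying the two rank claims that make the pivots available. The key fact is that $p\mathcal{C}$ is generated by the vectors $p\mathbf{v}_j$, each of order $p$ with zero $X$-part and $Y$-part $p\,Q[j]$; since the order-$p^2$ part of $\mathcal{C}\cong\Z_p^\gamma\times\Z_{p^2}^\delta$ is free of rank $\delta$, the group $p\mathcal{C}$ has order $p^\delta$, forcing the $p\,Q[j]$ to be $\Z_p$-independent in $p\Z_{p^2}^\beta\cong\Z_p^\beta$. This is precisely the statement that $Q$ has a unit $\delta\times\delta$ minor, used in the first step. Likewise, the independence over $\Z_p$ of the $\gamma-\kappa$ reduced $\mathbf{u}$-rows follows because the $\mathbf{u}_i$ form part of a minimal generating set of the order-$p$ summand $\Z_p^\gamma$, so their images stay independent once the $\kappa$ that pivot in $X$ are split off. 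With these two independence facts in hand every pivot above is legitimate, and since each row operation preserves orders and each column permutation respects the partition into $X$ and $Y$, the resulting code $\mathcal{C}'$ is permutation-equivalent to $\mathcal{C}$ and of the same type, completing the reduction.
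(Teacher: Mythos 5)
Your proof is correct, but it takes a genuinely different route from the paper: the paper's entire ``proof'' of Theorem \ref{th:1} is a one-line citation, specializing the standard-form theorem for $\Z_{p^r}\Z_{p^s}$-additive codes in \cite{AS15} to $r=1$, $s=2$. You instead give a self-contained reduction to standard form, which is essentially a reconstruction of the argument that \cite{AS15} itself carries out in greater generality. The two substantive points your argument must supply --- and does --- are the rank claims that make the pivots available: that $Q \bmod p$ has rank $\delta$ (which you correctly derive from $|p\mathcal{C}| = p^{\delta}$, forcing the vectors $p\mathbf{v}_j$, whose $X$-parts vanish, to be $\Z_p$-independent) and that the $\gamma-\kappa$ order-$p$ generators with zero $X$-projection remain $\Z_p$-independent on the middle $Y$-columns (which follows from the uniqueness of the expression $\sum\lambda_i\mathbf{u}_i+\sum\nu_j\mathbf{v}_j$ in $\mathcal{C}\cong\Z_p^{\gamma}\times\Z_{p^2}^{\delta}$). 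You are also right to restrict column permutations to act within $X$ and within $Y$ separately, and to check that each clearing step uses only rows vanishing on the previously established identity blocks; these are exactly the points where a careless Gaussian elimination would break the mixed-alphabet structure. What your approach buys is independence from the external reference and an explicit record of which row operations are admissible (a $\Z_p$-multiple of an order-$p$ row may be added to any row, only $p$-multiples of order-$p^2$ rows may be added to order-$p$ rows, etc.); what the paper's approach buys is brevity, since the general statement is already proved elsewhere. The only cosmetic gap is that you could state explicitly that $\mathcal{C}_p=\langle \mathbf{u}_1,\ldots,\mathbf{u}_\gamma, p\mathbf{v}_1,\ldots,p\mathbf{v}_\delta\rangle$, which is the one-line reason $(\mathcal{C}_p)_X$ is spanned by the rows of $B_1$ and hence has rank $\kappa$; this is implicit in your argument and does not affect its validity.
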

\begin{proof}
Straightforward by setting $r=1,s=2$ of $\Z_{p^r}\Z_{p^s}$-linear codes in \cite{AS15}.
\end{proof}

From Theorem \ref{th:1}, there is a $\mathbb{Z}_p \mathbb{Z}_{p^2}$-additive code $\mathcal{C}$ of type $(\alpha,\beta;\gamma,\delta;\kappa)$ if and only if
\begin{equation}\label{eq:1}
\begin{array}{c}
\alpha,\beta,\gamma,\delta,\kappa\geq 0, \alpha+\beta>0,\\
0<\delta+\gamma \leq \beta+\kappa \text{\rm{ and }} \kappa\leq \rm{min}(\alpha,\gamma).
\end{array}
\end{equation}

The definition of the duality for $\mathbb{Z}_p\mathbb{Z}_{p^k}$-additive codes is shown in \cite{SWD}, which includes $\mathbb{Z}_p\mathbb{Z}_{p^2}$-additive codes.
The \emph{inner product} between $(\mathbf{v}_1|\mathbf{w}_1)$ and $(\mathbf{v}_2|\mathbf{w}_2)$ in $\mathbb{Z}_p^\alpha \times \Z_{p^2}^\beta$ can be written as follows:
$$\langle(\mathbf{v}_1|\mathbf{w}_1),(\mathbf{v}_2|\mathbf{w}_2)\rangle=p\langle \mathbf{v}_1,\mathbf{v}_2\rangle + \langle \mathbf{w}_1,\mathbf{w}_2\rangle \in \Z_{p^2}.$$
Note that the result of the inner product $\langle \mathbf{v}_1,\mathbf{v}_2\rangle$ is from $\Z_p$,
and multiplication of its value by $p$ should be formally understood as the natural homomorphism from $\Z_p$ into $\Z_{p^2}$,
that is, $p\langle \mathbf{v}_1,\mathbf{v}_2\rangle \in p\Z_{p^2} \subseteq \Z_{p^2}$. More detail are suggested to see \cite{SWD}.

The \emph{dual code} $\mathcal{C}^\bot$ of a $\mathbb{Z}_p\Z_{p^2}$-additive code $\mathcal{C}$ is defined in the standard way by
\begin{eqnarray*}
 \mathcal{C}^\bot=\big\{(\mathbf{x}|\mathbf{y})
 {}\in \mathbb{Z}_p^\alpha \times \Z_{p^2}^\beta:
 {
 \langle(\mathbf{x}|\mathbf{y}),(\mathbf{v}|\mathbf{w})\rangle=0 ~\rm {for ~all}~ (\mathbf{v}|\mathbf{w})\in \mathcal{C}\big\}.}
\end{eqnarray*}
Readily, the dual code $\mathcal{C}^\bot$ is also a $\mathbb{Z}_p\Z_{p^2}$-additive code \cite{AS15}.
In particular, it is shown that the type of the code $\C^\bot$ is $(\alpha, \beta; \bar{\gamma}, \bar{\delta}; \bar{\kappa})$, where
$$\begin{array}{l}\bar{\gamma}=\alpha+\gamma-2\kappa, \\
\bar{\delta}=\beta-\gamma-\delta+\kappa, \\
\bar{\kappa}=\alpha-\kappa.
\end{array}
$$
The code $\Phi(\mathcal{C}^\bot)$ is denoted by $C_\bot$ and called the \emph{$\mathbb{Z}_p \mathbb{Z}_{p^2}$-dual code} of $C$.

\begin{lemma}\label{l:2}
For all $\mathbf{u},\mathbf{v}\in \mathbb{Z}_p^\alpha \times \mathbb{Z}_{p^2}^\beta$,
$\mathbf{u}=(u_1,\ldots,u_{\alpha+\beta})$,
$\mathbf{v}=(v_1,\ldots,v_{\alpha+\beta})$,
we have
$$
\Phi(\mathbf{u}+\mathbf{v})=\Phi(\mathbf{u})+\Phi(\mathbf{v})+\Phi(pP(\mathbf{u},\mathbf{v})),
$$
where $P(\mathbf{u},\mathbf{v})=(0,\ldots,0,P(u_{\alpha+1},v_{\alpha+1}),\ldots,P(u_{\alpha+\beta},v_{\alpha+\beta}))$ and
$$
P(u_i,v_i)=
P(u'_i,v'_i)=
\begin{cases}
 1 &
 \mbox{if $u'_i+v'_i \ge p$}
 \\ 0 &\mbox{otherwise},
             \end{cases}
\qquad u_i=u''_i p + u'_i, \quad  v_i=v''_i p + v'_i.
$$
\end{lemma}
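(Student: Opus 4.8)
The plan is to reduce the statement to a single $\Z_{p^2}$-coordinate and then verify it by a direct carry analysis. Since $\Phi$ acts as the identity on the first $\alpha$ coordinates (which lie in $\Z_p$) and applies $\phi$ independently to each of the remaining $\beta$ coordinates, and since $P(\mathbf{u},\mathbf{v})$ is defined to vanish on the first $\alpha$ coordinates, it suffices to establish the scalar identity
$$\phi(u+v)=\phi(u)+\phi(v)+\phi(pP(u,v)) \quad\text{in } \Z_p^p$$
for arbitrary $u,v\in\Z_{p^2}$. The full vector identity then follows coordinate-by-coordinate, using that on the $\Z_p$-block both sides reduce to $u+v$ while $P=0$ there.

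First I would record the two auxiliary computations that drive everything. Writing $p=1\cdot p+0$ in the form $\theta''p+\theta'$, the definition of $\phi$ gives $\phi(p)=(1,1,\ldots,1)$, so $\phi(pP(u,v))$ equals $(1,\ldots,1)$ when $u'+v'\ge p$ and $\mathbf{0}$ otherwise. Next, reading off $\phi(\theta)=\theta''(1,\ldots,1)+\theta'(0,1,\ldots,p-1)$ for $u$ and $v$ separately, I obtain
$$\phi(u)+\phi(v)=(u''+v'')(1,\ldots,1)+(u'+v')(0,1,\ldots,p-1),$$
with all coefficients interpreted in $\Z_p$.

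The core step is to expand $\phi(u+v)$ by splitting on whether the low digits produce a carry. If $u'+v'<p$, then $u+v=(u''+v'')p+(u'+v')$ is already in reduced digit form (up to reducing the high digit modulo $p$, which is harmless because $\phi$ takes that coefficient in $\Z_p$ anyway), so $\phi(u+v)$ coincides termwise with $\phi(u)+\phi(v)$, matching the claim with $P(u,v)=0$. If instead $u'+v'\ge p$, then $u+v=(u''+v''+1)p+(u'+v'-p)$; here the low digit contributes $(u'+v'-p)(0,\ldots,p-1)\equiv(u'+v')(0,\ldots,p-1)\pmod p$, while the high digit contributes an extra $(1,\ldots,1)=\phi(p)$ beyond $(u''+v'')(1,\ldots,1)$. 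Collecting terms in $\Z_p^p$ yields exactly $\phi(u)+\phi(v)+(1,\ldots,1)$, which is the right-hand side since $P(u,v)=1$.

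I expect the only delicate point to be bookkeeping the carry: one must check that an overflow in the low digit of the $\Z_{p^2}$-sum $u+v$ shifts precisely one unit into the high digit, and that this single shift is registered by $\phi$ as the additive correction $\phi(p)=(1,\ldots,1)$. Everything else is a routine reduction modulo $p$ of the two linear expressions above, after which assembling the coordinatewise conclusion into the vector identity for $\Phi$ is immediate.
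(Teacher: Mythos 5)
Your proposal is correct and follows essentially the same route as the paper's proof: reduce to a single $\Z_{p^2}$ coordinate, split on whether the low digits $u'+v'$ produce a carry, observe that $\phi(p)=(1,\ldots,1)$ accounts for the carry term, and reassemble coordinatewise. The only cosmetic difference is that you spell out why the $\Z_p$-block is trivial, which the paper leaves implicit.
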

\begin{proof}
It is sufficient to prove the claim for one
$\mathbb{Z}_{p^2}$ coordinate.
Assume
 $w_i 
= u_i + v_i$ (the addition is modulo $p^2$),
where
$w_i=w''_i p + w'_i$,
$u_i=u''_i p + u'_i$, and
$v_i=v''_i p + v'_i$.
Since
$w_i 
= u_i + v_i = (u_i''+v_i'')p + (u_i'+v_i') \bmod p^2 $,
we have $w'_i = u'_i+v'_i$ and $w''_i = u''_i+v''_i \bmod p$
if $u'_i+v'_i < p$. If $u'_i+v'_i \ge p$,
the formula is different:
$w'_i = u'_i+v'_i \bmod p$ and $w''_i = u''_i+v''_i + 1 \bmod p$.
Utilizing the definition of $P$, we get
\begin{equation}\label{eq:'''}
w'_i = u'_i+v'_i \bmod\ p \qquad\mbox{and}\quad
w''_i = u''_i+v''_i+P(u'_i,v'_i) \bmod\ p.
\end{equation}
Now, from the definition of $\phi$ and \eqref{eq:'''}, it is straightforward
$ \phi( pP(u'_i,v'_i)) = P(u'_i,v'_i) (1,1,\ldots,1),$
and we find
\begin{multline}\label{eq:phiu+v}
 \phi(u_i + v_i) =
 ( u''_i+v''_i+P(u'_i,v'_i) ) (1,1,\ldots,1)
+ (u'_i+v'_i) (0,1,\ldots,p-1)\\
= \phi(u_i) + \phi(v_i) + \phi( pP(u'_i,v'_i)).
\end{multline}
 Applying \eqref{eq:phiu+v} to each coordinate from
 $\alpha+1$ to $\alpha+\beta$ completes the proof.
\end{proof}
\begin{remark}
 The function $P$ can be treated as
 a $\{0,1,\ldots,p-1\}$-valued function in
 two $\{0,1,\ldots,p-1\}$-valued arguments
 $u'_i$, $v'_i$.
 As any such function, it can be represented
 as a polynomial of degree at most $p-1$ in each variable. For example $P(u'_i,v'_i) = u'_i v'_i$
 for $p=2$ and
 $P(u'_i,v'_i) = 2u'_i v'_i(1 + u'_i + v'_i) $
 for $p=3$. We also note that substituting
 $u_i$ and $v_i$ instead of $u'_i$ and $v'_i$
 in this polynomial does not change the value
 of $p P(u_i,v_i)$; so, for example,
 for $p=3$, it is safe to write
 $P(u_i,v_i) = 2u_i v_i(1 + u_i + v_i) $.
 For the rigor of the paper, we still use a new function $P'(\mathbf{u},\mathbf{v})$,
 where $P(\mathbf{u},\mathbf{v})=(p-1)P'(\mathbf{u},\mathbf{v})$.
 Note that $\Phi(pP(\mathbf{u},\mathbf{v}))=(p-1)\Phi(pP'(\mathbf{u},\mathbf{v}))$.
\end{remark}

\begin{lemma}\label{l:4}
Let $\mathcal{C}$ be a $\mathbb{Z}_p \mathbb{Z}_{p^2}$-additive code. The $\mathbb{Z}_p \mathbb{Z}_{p^2}$-linear code $C=\Phi(\mathcal{C})$ is linear if and only if
$pP'(\mathbf{u},\mathbf{v})\in \mathcal{C}$ for all $\mathbf{u},\mathbf{v}\in \mathcal{C}$.
\end{lemma}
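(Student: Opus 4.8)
The plan is to reduce the linearity of $C$ to mere closure under addition and then to read off the obstruction directly from the twisted additivity formula of Lemma~\ref{l:2}. First I would note that $C=\Phi(\mathcal{C})$ is a subset of $\Z_p^n$ containing $\mathbf{0}=\Phi(\mathbf{0})$ in which every element has additive order $p$, so $C$ is $\Z_p$-linear if and only if it is closed under addition: closure under addition already forces closure under additive inverses and under scalar multiplication by elements of $\{0,\dots,p-1\}$, the latter being repeated addition. Hence it suffices to decide, for arbitrary $\mathbf{u},\mathbf{v}\in\mathcal{C}$, whether $\Phi(\mathbf{u})+\Phi(\mathbf{v})\in C$. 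Rewriting Lemma~\ref{l:2} as $\Phi(\mathbf{u})+\Phi(\mathbf{v})=\Phi(\mathbf{u}+\mathbf{v})-\Phi\big(pP(\mathbf{u},\mathbf{v})\big)$ isolates exactly the failure of additivity in the single term $\Phi\big(pP(\mathbf{u},\mathbf{v})\big)$.

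For the forward direction, assume $C$ is linear. Since $\mathcal{C}$ is additive, $\mathbf{u}+\mathbf{v}\in\mathcal{C}$, so both $\Phi(\mathbf{u}+\mathbf{v})$ and $\Phi(\mathbf{u})+\Phi(\mathbf{v})$ lie in the linear code $C$; their difference $\Phi\big(pP(\mathbf{u},\mathbf{v})\big)$ therefore lies in $C$ as well, and applying the bijection $\Phi^{-1}$ yields $pP(\mathbf{u},\mathbf{v})\in\mathcal{C}$. It then remains to pass from $pP$ to $pP'$: because $P=(p-1)P'$ coordinatewise and both $pP,pP'$ live in the $\Z_p$-module $\{\mathbf{0}\}\times(p\Z_{p^2})^\beta$ on which $p-1$ acts as a unit, and because the additive group $\mathcal{C}$ is stable under integer scaling, the membership $pP(\mathbf{u},\mathbf{v})\in\mathcal{C}$ is equivalent to $pP'(\mathbf{u},\mathbf{v})\in\mathcal{C}$. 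This settles $(\Rightarrow)$.

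For the converse, assume $pP'(\mathbf{u},\mathbf{v})\in\mathcal{C}$, equivalently $pP(\mathbf{u},\mathbf{v})\in\mathcal{C}$, for all $\mathbf{u},\mathbf{v}$. Then $(\mathbf{u}+\mathbf{v})-pP(\mathbf{u},\mathbf{v})\in\mathcal{C}$, so its $\Phi$-image lies in $C$, and the goal is to show this image equals $\Phi(\mathbf{u})+\Phi(\mathbf{v})$. Here is the crux, and the step I expect to be the main obstacle: $\Phi$ is not a homomorphism, so I cannot split $\Phi\big((\mathbf{u}+\mathbf{v})-pP\big)$ for free. The key observation is that $\Phi$ \emph{is} $\Z_p$-linear on vectors all of whose $\Z_{p^2}$-entries lie in $p\Z_{p^2}$, since on such entries $\phi(pc)=c(1,\dots,1)$ depends $\Z_p$-linearly on $c$; moreover the carry function vanishes, $P\big(\mathbf{u}+\mathbf{v},\,-pP\big)=\mathbf{0}$, because a summand whose low part $(\cdot)'$ is zero never triggers a carry. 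Feeding $\mathbf{u}+\mathbf{v}$ and $-pP$ into Lemma~\ref{l:2} then gives $\Phi\big((\mathbf{u}+\mathbf{v})-pP\big)=\Phi(\mathbf{u}+\mathbf{v})-\Phi(pP)$, which by the rewritten Lemma~\ref{l:2} equals $\Phi(\mathbf{u})+\Phi(\mathbf{v})$. Hence $\Phi(\mathbf{u})+\Phi(\mathbf{v})\in C$, so $C$ is closed under addition and therefore linear, completing $(\Leftarrow)$.
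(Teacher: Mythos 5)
Your proof is correct and follows essentially the same route as the paper: both directions hinge on Lemma~\ref{l:2}, with the converse obtained by perturbing $\mathbf{u}+\mathbf{v}$ by the carry vector and checking that no new carries arise. The only cosmetic difference is that you subtract $pP$ where the paper adds $pP'$ and cancels via $p\,\Phi(pP')=\mathbf{0}$; your version also makes explicit two points the paper leaves implicit, namely the reduction of linearity to closure under addition and the vanishing of $P(\mathbf{u}+\mathbf{v},-pP)$.
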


\begin{proof}
Assume that $C$ is linear. Because $\mathcal{C}$ is additive, for any $\mathbf{u},\mathbf{v}\in \mathcal{C}$, we have $\mathbf{u}+\mathbf{v}\in \mathcal{C}$.
Then, $\Phi(\mathbf{u}),\Phi(\mathbf{v}),\Phi(\mathbf{u+v})\in C$.
Since $C$ is linear, $\Phi(pP'(\mathbf{u},\mathbf{v}))\in C$ by Lemma \ref{l:2}.
Therefore, $pP'(\mathbf{u},\mathbf{v})\in \mathcal{C}$. Conversely, assume that $pP'(\mathbf{u},\mathbf{v})\in \mathcal{C}$ for all $\mathbf{u},\mathbf{v}\in \mathcal{C}$. Let $\mathbf{x},\mathbf{y}\in C$, then there are $\mathbf{u'},\mathbf{v'}\in \C$ such that $\mathbf{x}=\phi(\mathbf{u'}),\mathbf{y}=\Phi(\mathbf{v'})$.
Thus $pP'(\mathbf{u'},\mathbf{v'})\in \mathcal{C}$.
Since $\C$ is additive, $\mathbf{u'}+\mathbf{v'}+pP'(\mathbf{u'},\mathbf{v'})\in \mathcal{C}$.
Therefore, $\Phi(\mathbf{u'}+\mathbf{v'}+pP'(\mathbf{u'},\mathbf{v'}))\in C$.
Then $$\begin{aligned}
\Phi(\mathbf{u'}+\mathbf{v'}+pP'(\mathbf{u'},\mathbf{v'}))
&=\Phi(\mathbf{u'}+\mathbf{v'})+\Phi(pP'(\mathbf{u'},\mathbf{v'}))\\
&=\Phi(\mathbf{u'})+\Phi(\mathbf{v'})+\Phi(pP(\mathbf{u'},\mathbf{v'}))+\Phi(pP'(\mathbf{u'},\mathbf{v'}))\\
&=\Phi(\mathbf{u'})+\Phi(\mathbf{v'})+(p-1)\Phi(pP'(\mathbf{u'},\mathbf{v'}))+\Phi(pP'(\mathbf{u'},\mathbf{v'}))\\
&=\Phi(\mathbf{u'})+\Phi(\mathbf{v'}).
\end{aligned}$$
Hence, $\Phi(\mathbf{u'})+\Phi(\mathbf{v'})=\x+\y \in C$.
\end{proof}

\section{Rank of $\mathbb{Z}_3 \mathbb{Z}_{9}$-additive codes}\label{sec:3}

Let $\mathcal{C}$ be a $\mathbb{Z}_p \mathbb{Z}_{p^2}$-additive code of type $(\alpha, \beta;\gamma,\delta;\kappa)$ and $C=\Phi(\mathcal{C})$ of length $\alpha+p\beta$.
We have known the definition of $rank(C)$ before. For general prime $p$, it is very difficult to discuss all values of $r=rank(C)$ clearly.
Therefore, in this section, we only consider $p=3$, then give the range of values $r=rank(C)$ and prove that there is a $\mathbb{Z}_3 \mathbb{Z}_{9}$-linear code of type $(\alpha, \beta;\gamma,\delta;\kappa)$ with $r=rank(\mathcal{C})$ for any positive integer $r$.
%

If $p=3$, for all $\mathbf{u},\mathbf{v}\in \mathbb{Z}_3^\alpha \times \mathbb{Z}_{9}^\beta$,
it is easy to check that $\Phi(\mathbf{u}+\mathbf{v})=\Phi(\mathbf{u})+\Phi(\mathbf{v})+2\Phi(3(\mathbf{u}\ast \mathbf{v}+\mathbf{u}\ast \mathbf{u}\ast \mathbf{v}+\mathbf{u}\ast \mathbf{v}\ast \mathbf{v}))$ by Lemma \ref{l:2}, where $\ast$ denotes the componentwise multiplication.
Then we have the following theorem.

\begin{theorem}\label{th:5}
Let $\mathcal{C}$ be a $\mathbb{Z}_3\mathbb{Z}_{9}$-additive code of type $(\alpha, \beta;\gamma,\delta;\kappa)$ which satisfies (\ref{eq:1}),
$C=\Phi(\mathcal{C})$ be the corresponding $\Z_3\Z_{9}$-linear code of length $n=\alpha+3\beta$.

\begin{itemize}
\item[(i)] Let $\mathcal{G}$ be the generator matrix of $\mathcal{C}$,
and let $\{\mathbf{u}_i\}_{i=1}^\gamma$, $\{\mathbf{v}_j\}_{j=1}^\delta$ be the rows of order $3$ and $9$ in $\mathcal{G}$, respectively.
Then $\langle C\rangle$ is generated by $\{\Phi(\mathbf{u}_i)\}_{i=1}^\gamma$, $\{\Phi(\mathbf{v}_j)\}_{j=1}^\delta$, $\{\Phi(3\mathbf{v}_k*\mathbf{v}_l)\}_{1\leq l\leq k\leq \delta}$
and $\{\Phi(3\mathbf{v}_x*\mathbf{v}_y*\mathbf{v}_z)\}_{1\leq x\leq y\leq z\leq\delta}$.
\item[$(ii)$] $rank(C)\in \bigg\{\gamma+2\delta,...,\rm{min}\bigg(\beta+\gamma+\kappa,\gamma+\delta+\dbinom{\delta+1}{2}+\dbinom{\delta+2}{3}\bigg)\bigg\}$.
Let $rank(C)=r=\gamma+2\delta+\overline{r}$. Then $\overline{r}\in\bigg\{0,1,...,\rm{min}\bigg(\beta-(\gamma-\kappa)-\delta,\dbinom{\delta+1}{2}+\dbinom{\delta+2}{3}-\delta\bigg)\bigg\}$.
\item[$(iii)$] The linear code $\langle C\rangle$ over $\Z_3$ is $\mathbb{Z}_3 \mathbb{Z}_{9}$-linear.
\end{itemize}
\end{theorem}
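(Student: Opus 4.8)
The plan is to build everything on the $p=3$ addition formula
$\Phi(\mathbf{u}+\mathbf{v})=\Phi(\mathbf{u})+\Phi(\mathbf{v})+2\Phi\big(3(\mathbf{u}\ast\mathbf{v}+\mathbf{u}\ast\mathbf{u}\ast\mathbf{v}+\mathbf{u}\ast\mathbf{v}\ast\mathbf{v})\big)$
together with three elementary observations. First, by Lemma \ref{l:2} the correction term is supported only on the $Y$-coordinates and is the $\Phi$-image of a multiple of $3$; on the subgroup of elements whose $Y$-part lies in $3\Z_9^\beta$ the map $\Phi$ is $\Z_3$-linear, since there the correction vanishes (a product of two multiples of $3$ is $0$ in $\Z_9$) and $\phi(3a)=(a,a,a)$ is additive. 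Second, each row $\mathbf{u}_i$ has order $3$, so $\mathbf{u}_i|_Y\in 3\Z_9^\beta$; hence any correction term containing a factor $\mathbf{u}_i$ is a multiple of $9$, i.e. $0$, so only products of the order-$9$ rows $\mathbf{v}_j$ among themselves survive. Third, since $c^3\equiv c\pmod 3$ we have $3\,\mathbf{v}_j\ast\mathbf{v}_j\ast\mathbf{v}_j=3\mathbf{v}_j$ in $\Z_3^\alpha\times\Z_9^\beta$; in particular $\Phi(3\mathbf{v}_j)$ is the cubic generator with $x=y=z=j$ and lies in $C$ because $3\mathbf{v}_j\in\mathcal{C}$.

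For part (i) I would write a codeword as $\mathbf{c}=\sum_i\lambda_i\mathbf{u}_i+\sum_j\nu_j\mathbf{v}_j$. By the first two observations $\Phi(\mathbf{c})=\sum_i\lambda_i\Phi(\mathbf{u}_i)+\Phi\big(\sum_j\nu_j\mathbf{v}_j\big)$, and expanding $\Phi(\sum_j\nu_j\mathbf{v}_j)$ by iterating the addition formula produces $\sum_j\nu_j\Phi(\mathbf{v}_j)$ plus correction terms, each of which is $\Phi$ of a multiple of $3$ of a product of at most three $\mathbf{v}_j$. Using the $\Z_3$-linearity of $\Phi$ on multiples of $3$, and the symmetry of $\ast$ to restrict indices, each correction becomes a $\Z_3$-combination of the $\Phi(3\mathbf{v}_k\ast\mathbf{v}_l)$ ($l\le k$) and $\Phi(3\mathbf{v}_x\ast\mathbf{v}_y\ast\mathbf{v}_z)$ ($x\le y\le z$); this gives $\langle C\rangle\subseteq\mathrm{span}$ of the listed vectors. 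The reverse inclusion — that each generator itself lies in $\langle C\rangle$ — is the crux, and I would argue by ascending degree: $\Phi(3\mathbf{v}_j)\in C$ is free, then $\Phi(2\mathbf{v}_j)-2\Phi(\mathbf{v}_j)-\Phi(3\mathbf{v}_j)=2\Phi(3\mathbf{v}_j\ast\mathbf{v}_j)$ recovers the diagonal quadratics, and the mixed quadratic and cubic generators are extracted by applying $\Phi$ to the codewords $\mathbf{v}_k+\mathbf{v}_l$, $2\mathbf{v}_k+\mathbf{v}_l$, $\mathbf{v}_x+\mathbf{v}_y+\mathbf{v}_z$, and so on, and solving the resulting small $\Z_3$-linear systems after subtracting the already-secured lower-degree terms. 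Verifying that these systems are non-degenerate, so that every individual generator can be isolated, is the main obstacle.

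For part (iii) let $\mathcal{D}$ be the additive subgroup of $\Z_3^\alpha\times\Z_9^\beta$ generated by all vectors listed in (i). I would verify the hypothesis of Lemma \ref{l:4}: for $\mathbf{a},\mathbf{b}\in\mathcal{D}$ the element $3P'(\mathbf{a},\mathbf{b})=3(\mathbf{a}\ast\mathbf{b}+\mathbf{a}\ast\mathbf{a}\ast\mathbf{b}+\mathbf{a}\ast\mathbf{b}\ast\mathbf{b})$ lies in $\mathcal{D}$. Indeed, all contributions from the $\mathbf{u}_i$-part and from the order-$3$ generators are multiples of $9$ and vanish, so only the $\mathbf{v}_j$-part matters; expanding, $3P'(\mathbf{a},\mathbf{b})$ becomes an integer combination of the generators $3\mathbf{v}_k\ast\mathbf{v}_l$ and $3\mathbf{v}_x\ast\mathbf{v}_y\ast\mathbf{v}_z$, which belong to $\mathcal{D}$ by construction. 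Hence $\Phi(\mathcal{D})$ is linear; it contains $C$ and, by the same expansion as in (i), every $\Phi(\mathbf{d})$ is a $\Z_3$-combination of the generators' images, so $\Phi(\mathcal{D})=\langle C\rangle$, proving $\langle C\rangle$ is $\Z_3\Z_9$-linear.

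Finally, for part (ii) the lower bound $rank(C)\ge\gamma+2\delta$ is immediate from $|C|=|\mathcal{C}|=3^{\gamma+2\delta}$ and $C\subseteq\langle C\rangle$. One upper bound, $rank(C)\le\gamma+\delta+\binom{\delta+1}{2}+\binom{\delta+2}{3}$, is just the count of generators in (i). For the other I would analyze $\mathcal{D}$ from part (iii): since every added generator has order $3$, $\mathcal{D}$ has the same number $\delta$ of order-$9$ invariant factors as $\mathcal{C}$, whence $rank(C)=\dim_{\Z_3}(\mathcal{D}_3)+\delta$, with $\mathcal{D}_3$ the order-$3$ subcode. The new generators vanish on $X$, so $(\mathcal{D}_3)_X=(\mathcal{C}_p)_X$ still has dimension $\kappa$ while the kernel of the $X$-projection lies in $3\Z_9^\beta\cong\Z_3^\beta$; thus $\dim_{\Z_3}(\mathcal{D}_3)\le\beta+\kappa$. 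Writing $\overline{r}=\dim_{\Z_3}(\mathcal{D}_3)-(\gamma+\delta)$ then gives $\overline{r}\le\beta-(\gamma-\kappa)-\delta$, the stated bound; combining the two upper bounds completes (ii).
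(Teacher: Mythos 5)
Your plan follows essentially the same route as the paper's proof: expand $\Phi$ of a general codeword via the $p=3$ addition formula, note that all correction terms involving the order-$3$ rows vanish and that $3\mathbf{v}^3=3\mathbf{v}$ in $\Z_9$, read off the generator count $\gamma+\delta+\binom{\delta+1}{2}+\binom{\delta+2}{3}$, and get the other bound from the type of the enlarged additive code generated by the listed vectors. You are in fact more explicit than the paper at the second bound: the paper merely calls it ``trivial by the form of $\mathcal{G}_S$,'' while your count $\dim_{\Z_3}(\mathcal{D}_3)\le\kappa+\beta$ via the $X$-projection is the honest version of that argument. Note that it yields $rank(C)\le\beta+\delta+\kappa$, which agrees with the $\overline{r}$-form of the bound in (ii) and with Theorem \ref{th:7}; the ``$\beta+\gamma+\kappa$'' displayed in the first sentence of (ii) appears to be a typo, so do not try to reproduce it.

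The step you flag as open --- isolating each individual generator $\Phi(3\mathbf{v}_k*\mathbf{v}_l)$, $\Phi(3\mathbf{v}_k^2*\mathbf{v}_l)$, $\Phi(3\mathbf{v}_k*\mathbf{v}_l^2)$, $\Phi(3\mathbf{v}_x*\mathbf{v}_y*\mathbf{v}_z)$ inside $\langle C\rangle$, so that the span of the listed vectors is \emph{contained in} $\langle C\rangle$ --- is a genuine gap for the literal ``generated by'' claim in (i) and for the identification $\Phi(\mathcal{S}_\mathcal{C})=\langle C\rangle$ in (iii); the paper's own proof only establishes the containment $\langle C\rangle\subseteq\mathrm{span}$ and silently skips this direction. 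Fortunately your ascending-degree scheme does close. The diagonal quadratics come from $\Phi(2\mathbf{v}_j)-2\Phi(\mathbf{v}_j)$ as you say. For fixed $k<l$, applying the addition formula (Lemma \ref{l:2}) to $a\mathbf{v}_k+b\mathbf{v}_l$ with $(a,b)\in\{(1,1),(2,1),(1,2)\}$ and subtracting the already-secured degree-$\le1$ and diagonal terms produces the three relations with coefficient rows $(ab,a^2b,ab^2)\bmod 3$, i.e. $(1,1,1)$, $(2,1,2)$, $(2,2,1)$, whose determinant is $1$ in $\Z_3$; so the three mixed generators are each in $\langle C\rangle$. The triple products $\Phi(3\mathbf{v}_x*\mathbf{v}_y*\mathbf{v}_z)$ are then isolated from $\Phi(\mathbf{v}_x+\mathbf{v}_y+\mathbf{v}_z)$ after subtracting the pairwise terms. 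With that verification written out, your argument is complete and, if anything, more rigorous than the published one.
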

\begin{proof}
$(i)$
Let $\mathbf{c}\in \C$, without loss of generality, $\mathbf{c}$ can be expressed as $\mathbf{c}=\Sigma_{j=1}^\zeta \mathbf{v}_j+\omega$,
where $\zeta\leq\delta$ and $\omega$ is a codeword in $\C$ of order $3$.
By Lemma \ref{l:2}, we have $\Phi(\mathbf{c})=\Phi(\Sigma_{j=1}^\zeta \mathbf{v}_j)+\Phi(\omega)$,
where $\Phi(\omega)$ is a linear combination of $\{\Phi(\mathbf{u}_i)\}_{i=1}^\gamma$, $\{\Phi(3\mathbf{v}_j)\}_{j=1}^\delta$.
Note that
$\Phi(\Sigma_{j=1}^\zeta \mathbf{v}_j)=\Sigma_{j=1}^\zeta\Phi(\mathbf{v}_j)+\Sigma_{1\leq l \leq k\leq \zeta}(\tilde{a}\Phi(3(\mathbf{v}_k\ast v_l))+\Sigma_{1\leq x \leq y\leq z\leq \zeta}(\tilde{b}\Phi(3(\mathbf{v}_x\ast \mathbf{v}_y\ast \mathbf{v}_z))$, $\tilde{a},\tilde{b}\in \Z_3$.
Since $\Phi(3\mathbf{v}_j)=\Phi(3\mathbf{v}_j\ast\mathbf{v}_j\ast\mathbf{v}_j)$ by Lemma \ref{l:2},
$\Phi(\mathbf{c})$ is generated by $\{\Phi(\mathbf{u}_i)\}_{i=1}^\gamma$, $\{\Phi(\mathbf{v}_j)\}_{j=1}^\delta$,
$\{\Phi(3\mathbf{v}_k*\mathbf{v}_l)\}_{1\leq l\leq k\leq \delta}$
and $\{\Phi(3\mathbf{v}_x*\mathbf{v}_y*\mathbf{v}_z)\}_{1\leq x\leq y\leq z\leq\delta}$.

$(ii)$ The bound $\gamma+\delta+\dbinom{\delta+1}{2}+\dbinom{\delta+2}{3}$ is straightforward by (i),
and the bound $\beta+\gamma+\kappa$ is trivial by the form of $\mathcal{G}_S$ in Theorem \ref{th:1}.

$(iii)$ Let $\mathcal{S}_\mathcal{C}$ be a $\Z_3\Z_9$-additive code generated by
$\{\mathbf{u}_i\}_{i=1}^\gamma$, $\{\mathbf{v}_j\}_{j=1}^\delta$, $\{3\mathbf{v}_k*\mathbf{v}_l\}_{1\leq l\leq k\leq \delta}$
and $\{3\mathbf{v}_x*\mathbf{v}_y*\mathbf{v}_z\}_{1\leq x\leq y\leq z\leq\delta}$,
then $\mathcal{S}_\mathcal{C}$ is of type $(\alpha,\beta;\gamma+\overline{r},\delta;\kappa)$ and $\Phi(\mathcal{S}_\mathcal{C})=\langle C\rangle$.
\end{proof}
For convenience, we use $\mathbf{v}_j^2$ to denote $\mathbf{v}_j*\mathbf{v}_j$, and write the vectors in Theorem \ref{th:5} (i) in the following way:
$\{\Phi(\mathbf{u}_i)\}_{i=1}^\gamma$, $\{\Phi(\mathbf{v}_j)\}_{j=1}^\delta$, $\{\Phi(3\mathbf{v}_j)\}_{j=1}^\delta$, $\{\Phi(3\mathbf{v}_j^2)\}_{j=1}^\delta$, $\{\Phi(3\mathbf{v}_k*\mathbf{v}_l)\}_{1\leq l< k\leq \delta}$, $\{\Phi(3\mathbf{v}_x^2*\mathbf{v}_y)\}_{1\leq x< y\leq \delta}$, $\{\Phi(3\mathbf{v}_x*\mathbf{v}_y^2)\}_{1\leq x< y\leq \delta}$, $\{\Phi(3\mathbf{v}_x*\mathbf{v}_y*\mathbf{v}_z)\}_{1\leq x< y<z\leq \delta}$.
The number $\gamma+\delta+\dbinom{\delta+1}{2}+\dbinom{\delta+2}{3}$ can be represented as
$\gamma+2\delta+\dbinom{\delta}{2}+\delta+2\dbinom{\delta}{2}+\dbinom{\delta}{3}$=
$\gamma+3\delta+3\dbinom{\delta}{2}+\dbinom{\delta}{3}$.

The following theorem provides a method to construct a $\Z_3\Z_{9}$-linear code for each value of $rank(C)$ which is shown in Theorem \ref{th:5}.
\begin{theorem}\label{th:7}
Let $\alpha,\beta,\gamma,\delta,\kappa$ be positive integers satisfying (\ref{eq:1}).
Then, there is a $\mathbb{Z}_3 \mathbb{Z}_{9}$-linear code $C$ of type $(\alpha,\beta;\gamma,\delta;\kappa)$ with $rank(C)=r$ if and only if
$$r\in \bigg\{\gamma+2\delta,\ldots,\rm{min}\bigg(\beta+\delta+\kappa,\gamma+\delta+\dbinom{\delta+1}{2}+\dbinom{\delta+2}{3}\bigg)\bigg\}.$$
\end{theorem}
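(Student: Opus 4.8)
The proof splits into the two implications. For the \emph{only if} direction, the lower bound $r\ge\gamma+2\delta$ holds because $\Phi$ is a bijection, so $|C|=|\mathcal{C}|=3^{\gamma+2\delta}$, and since $C\subseteq\langle C\rangle$ with $\langle C\rangle$ linear over $\mathbb{F}_3$ we get $3^{\gamma+2\delta}=|C|\le|\langle C\rangle|=3^{rank(C)}$. For the upper bound, the generating set of Theorem \ref{th:5}(i) forces $r\le\gamma+\delta+\binom{\delta+1}{2}+\binom{\delta+2}{3}$, while Theorem \ref{th:5}(iii) exhibits $\langle C\rangle=\Phi(\mathcal{S}_{\mathcal{C}})$ with $\mathcal{S}_{\mathcal{C}}$ of type $(\alpha,\beta;\gamma+\overline r,\delta;\kappa)$; applying (\ref{eq:1}) to this valid type gives $\delta+(\gamma+\overline r)\le\beta+\kappa$, i.e. $r=\gamma+2\delta+\overline r\le\beta+\delta+\kappa$. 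Thus $r$ lies in the asserted range.

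For the \emph{if} direction my plan is, for each target value, to write down a generator matrix in the standard form $\mathcal{G}_S$ of Theorem \ref{th:1} that realizes it. Put $r=\gamma+2\delta+\overline r$ and let $w=\beta-(\gamma-\kappa)-\delta$ be the width of the first $\mathbb{Z}_9$-column block of $\mathcal{G}_S$, so the admissible values are $\overline r\in\{0,1,\dots,\min(w,N)\}$ with $N=\delta+3\binom{\delta}{2}+\binom{\delta}{3}$. I would set $T'=T_1=T_2=S'=R=0$, keep the identity blocks $I_\kappa,I_{\gamma-\kappa},I_\delta$, and leave as the only free data the $\delta\times w$ matrix $S$, taken over $\{0,1,2\}\subseteq\mathbb{Z}_9$. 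Writing $\mathbf{s}_j$ for the mod-$3$ reduction of the $Y$-part of the $j$-th order-$9$ row, one has $\mathbf{s}_j=(\bar S_j\mid 0\mid e_j)$. Using $3(\mathbf{v}_a\ast\mathbf{v}_b)=3(\mathbf{s}_a\ast\mathbf{s}_b)$ and the identity $t^3=t$ on $\mathbb{Z}_3$, a direct computation reduces each product generator of Theorem \ref{th:5}(i), taken modulo $\langle\Phi(\mathbf{u}_i),\Phi(3\mathbf{v}_j)\rangle$, to a vector in the first block $\mathbb{F}_3^{\,w}$: the square $\mathbf{v}_j^2$ contributes $\bar S_j^2-\bar S_j$, while $\mathbf{v}_k\ast\mathbf{v}_l$, $\mathbf{v}_x^2\ast\mathbf{v}_y$, $\mathbf{v}_x\ast\mathbf{v}_y^2$, $\mathbf{v}_x\ast\mathbf{v}_y\ast\mathbf{v}_z$ contribute the corresponding componentwise monomials in the $\bar S$'s. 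Hence, by Theorem \ref{th:5}, the base $\gamma+2\delta$ vectors $\{\Phi(\mathbf{u}_i)\},\{\Phi(\mathbf{v}_j)\},\{\Phi(3\mathbf{v}_j)\}$ are independent and $\overline r=rank(C)-\gamma-2\delta=\dim_{\mathbb{F}_3}\Lambda$, where $\Lambda\subseteq\mathbb{F}_3^{\,w}$ is the span of these $N$ reduced products. The problem becomes the purely linear-algebraic one of realizing every value $0\le\overline r\le\min(w,N)$ as $\dim\Lambda$.

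I would build $S$ one column at a time. Appending a fresh column with entries $\mathbf{a}=(a_1,\dots,a_\delta)\in\mathbb{F}_3^\delta$ appends to each of the $N$ reduced products the value of the corresponding function $M(\mathbf{a})$ (e.g. $a_ka_l$, $a_x^2a_y$, $a_j^2-a_j$, \ldots) in the new coordinate. Appending one coordinate can raise $\dim\Lambda$ by at most $1$, since the projection forgetting the new coordinate maps the enlarged span onto the old one with kernel inside a single line; and it raises it by exactly $1$ precisely when $\mathbf{a}$ breaks some current $\mathbb{F}_3$-linear relation among the $N$ reduced products. The key point I would establish is that the $N$ functions $\mathbf{a}\mapsto M(\mathbf{a})$ on $\mathbb{F}_3^\delta$ are $\mathbb{F}_3$-linearly independent: after reducing modulo $a_j^3=a_j$ each has a distinct leading monomial ($a_j^2$ for the squares, and $a_ka_l$, $a_x^2a_y$, $a_xa_y^2$, $a_xa_ya_z$ otherwise), so they are triangular with respect to the monomial basis of functions on $\mathbb{F}_3^\delta$. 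Consequently, whenever the current dimension is $<N$ a nonzero relation exists, and independence of the functions supplies an $\mathbf{a}$ violating it, giving exactly $+1$. Starting from $S=0$ (where $\Lambda=0$) and iterating $\overline r\le\min(w,N)$ times yields $\dim\Lambda=\overline r$ using $\overline r\le w$ nonzero columns. Since only $S$ is altered, neither $\gamma,\delta,\kappa$ nor the order-$3$ subcode changes, so the code has type $(\alpha,\beta;\gamma,\delta;\kappa)$ and $rank(C)=\gamma+2\delta+\overline r=r$.

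The hard part will be the exact-increment control in the last paragraph. A priori the $N$ product monomials are algebraically entangled — they are built from the same $\delta$ vectors, and over $\mathbb{F}_3$ one has $t^2=\,$the nonzero-indicator of $t$ — so it is not clear that \emph{every} integer value of $\overline r$, rather than only sporadic ones, is attainable. The linear independence of the associated evaluation functions on $\mathbb{F}_3^\delta$ is exactly what makes the greedy, one-column-at-a-time argument go through, and verifying it (together with the triangularity via distinct leading monomials) is the crux. By comparison, the bookkeeping that all reduced products land in the single block $\mathbb{F}_3^{\,w}$, so that $\dim\Lambda$ is not inflated by $\Phi(\mathbf{u}_i)$ or by the identity block $I_\delta$, is routine once $T_1=T_2=R=0$ are imposed.
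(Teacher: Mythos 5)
Your proposal is correct, and for the sufficiency direction it takes a genuinely different route from the paper. The paper also builds $S_r$ column by column, but it does so by writing down explicit columns — the matrices $\bar{A}=2I_\delta$, $\bar{B}$, $\bar{C}$, $\bar{D}$, $\bar{E}$ with columns $2\mathbf{h}_j$, $\mathbf{h}_k+\mathbf{h}_l$, $2(\mathbf{h}_k+\mathbf{h}_l)$, $\mathbf{h}_k+2\mathbf{h}_l$, $\mathbf{h}_x+\mathbf{h}_y+\mathbf{h}_z$ — and then verifies in five separate cases that each appended column activates exactly one new product generator while keeping the previously activated ones independent. You replace that case analysis with a single abstract lemma: the $N=\delta+3\binom{\delta}{2}+\binom{\delta}{3}$ reduced product functions $\mathbf{a}\mapsto a_j^2-a_j,\ a_ka_l,\ a_x^2a_y,\ a_xa_y^2,\ a_xa_ya_z$ are $\mathbb{F}_3$-linearly independent as functions on $\mathbb{F}_3^\delta$ (distinct reduced monomials), so whenever the current span has dimension $<N$ some column value breaks a surviving relation and the rank can be incremented by exactly $1$. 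This is the same greedy skeleton but a cleaner and more robust justification of the "exactly $+1$" step, at the cost of being non-explicit (it proves suitable columns exist rather than exhibiting them); it is also the form of the argument most likely to generalize beyond $p=3$. Your construction is consistent with the paper's: e.g.\ $t^2-t$ vanishes on $\{0,1\}$ and equals $2$ at $t=2$, which is exactly why the paper's first block is $2I_\delta$. Two further points in your favour: your derivation of the upper bound $r\le\beta+\delta+\kappa$ from the type $(\alpha,\beta;\gamma+\overline{r},\delta;\kappa)$ of $\mathcal{S}_{\mathcal{C}}$ together with condition (\ref{eq:1}) is more careful than the paper's, which dismisses it as trivial (and in Theorem \ref{th:5}(ii) even writes the inconsistent bound $\beta+\gamma+\kappa$); and your explicit check that the reduced products land in the width-$w$ block and do not interfere with the $\gamma+2\delta$ base vectors is bookkeeping the paper leaves implicit.
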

\begin{proof}
Let $\C$ be a $\Z_3\Z_{9}$-additive code of type $(\alpha,\beta;\gamma,\delta;\kappa)$ with generator matrix
$$\mathcal{G}=\left(\begin{array}{cc|ccc}I_\kappa & T' & \mathbf{0} & \mathbf{0} & \mathbf{0}\\
\mathbf{0} & \mathbf{0} & 3T_1 & 3I_{\gamma-\kappa} & \mathbf{0}\\
\hline
\mathbf{0} & S' & S_r & \mathbf{0} & I_\delta
\end{array}\right),$$
where $S_r$ is a matrix over $\Z_{9}$ of size $\delta\times(\beta-(\gamma-\kappa)-\delta)$, and $C=\Phi(\C)$ is a $\Z_3\Z_{9}$-linear code.
Let $\{\mathbf{u}_i\}_{i=1}^\gamma$ and $\{\mathbf{v}_j\}_{j=1}^\delta$ be the row vectors of order $3$ and $9$ in $\G$, respectively.

The necessary condition follows from Theorem \ref{th:5}.
For the sufficiency of the theorem, we should construct a $\mathbb{Z}_3 \mathbb{Z}_{9}$-linear code $C$ of type $(\alpha,\beta;\gamma,\delta;\kappa)$ with $rank(C)=r$.

Let $\mathbf{h}_k,1\leq k\leq \delta$ be the column vector of length $\delta$, with a $1$ in the $k$-th coordinate and zeros elsewhere.
Define five matrices $\bar{A}_{\delta \times \delta}$, $\bar{B},\bar{C},\bar{D}$ of size $\delta \times \dbinom{\delta}{2}$ and $\bar{E}$ of size $\delta \times \dbinom{\delta}{3}$ over $\Z_9$ as follows:
\begin{itemize}
\item $\bar{A}_{\delta \times \delta}=2I_\delta$;
\item $\bar{B}=(\mathbf{b}_j)$, where $\mathbf{b}_j=\mathbf{h}_k+\mathbf{h}_l$ for $1\leq k< l\leq \delta, j\in\bigg\{1,2,\ldots, \dbinom{\delta}{2}\bigg\}$,
and $\mathbf{b}_{j_1}\neq \mathbf{b}_{j_2}$ if $j_1 \neq j_2$;
\item $\bar{C}=2\bar{B}$;
\item $\bar{D}=(\mathbf{d}_j)$, where $\mathbf{d}_j=\mathbf{h}_k+2\mathbf{h}_l$ (or equivalently, $\mathbf{d}_j=2\mathbf{h}_k+\mathbf{h}_l$) for $1\leq k< l\leq \delta, j\in\bigg\{1,2,\ldots, \dbinom{\delta}{2}\bigg\}$,
and $\mathbf{d}_{j_1}\neq \mathbf{d}_{j_2}$ if $j_1 \neq j_2$;
\item $\bar{E}=(\mathbf{e}_j)$, where $\mathbf{e}_j=\mathbf{h}_x+\mathbf{h}_y+\mathbf{h}_z$ for $1\leq x< y<z\leq \delta, j\in\bigg\{1,2,\ldots, \dbinom{\delta}{3}\bigg\}$,
and $\mathbf{e}_{j_1}\neq \mathbf{e}_{j_2}$ if $j_1 \neq j_2$.
\end{itemize}

Since the bound $\beta+\delta+\kappa$ is trivial, it is enough to consider
$\rm{min}\bigg(\beta+\delta+\kappa,\gamma+\delta+\dbinom{\delta+1}{2}+\dbinom{\delta+2}{3}\bigg)=\gamma+\delta+\dbinom{\delta+1}{2}+\dbinom{\delta+2}{3}$,
that is,
$\overline{r}\in \bigg\{0,1,\ldots,\delta+3\dbinom{\delta}{2}+\dbinom{\delta}{3}\bigg\}$.
We construct $S_r$ over $\mathbb{Z}_{9}$ in the following way such that we can get a $\Z_3\Z_{9}$-linear code.

$(1)$ If $\overline{r}\in \{0,1,\ldots,\delta\}$,
then choose $\overline{r}$ different column vectors from the matrix $\bar{A}$ as $\overline{r}$ columns in $S_r$, and the remaining columns are all-zero vectors.
That is, $S_r=(\bar{\bar{A}}|\mathbf{0})$, where $\bar{\bar{A}}$ is a matrix consisting of $\overline{r}$ vectors from $\bar{A}$.
In this case, $\{\Phi(3\mathbf{v}_k*\mathbf{v}_l)\}_{1\leq l< k\leq \delta}$, $\{\Phi(3\mathbf{v}_x^2*\mathbf{v}_y)\}_{1\leq x< y\leq \delta}$, $\{\Phi(3\mathbf{v}_x*\mathbf{v}_y^2)\}_{1\leq x< y\leq \delta}$, $\{\Phi(3\mathbf{v}_x*\mathbf{v}_y*\mathbf{v}_z)\}_{1\leq x< y<z\leq \delta}$
are all zero vectors. Considering $\gamma+3\delta$ vectors $\{\Phi(\mathbf{u}_i)\}_{i=1}^\gamma$, $\{\Phi(\mathbf{v}_j)\}_{j=1}^\delta$, $\{\Phi(3\mathbf{v}_j)\}_{j=1}^\delta$, $\{\Phi(3\mathbf{v}_j^2)\}_{j=1}^\delta$, we claim that there are $\gamma+2\delta+\overline{r}$ vectors from them are linear independent. Namely, $rank(C)=\gamma+2\delta+\overline{r}$. In fact, $\{\Phi(\mathbf{u}_i)\}_{i=1}^\gamma$, $\{\Phi(\mathbf{v}_j)\}_{j=1}^\delta$, $\{\Phi(3\mathbf{v}_j)\}_{j=1}^\delta$ are linear independent
by the form of $\mathcal{G}$,
and there are $\overline{r}$ vectors in $\{\Phi(3\mathbf{v}_j^2)\}_{j=1}^\delta$ that are linear independent to all these $\gamma+2\delta$ vectors. Without loss of generality, assume $S'=\mathbf{0}$ in $\mathcal{G}$.
Note that $\{\Phi(3\mathbf{v}_j^2)\}_{j=1}^\delta$ are linear independent. In all these $\delta$ vectors $\mathbf{v}_j$,
there are exactly $\overline{r}$ vectors with the form
$$\mathbf{v}_{j'}=(0,\ldots,0,2,0,\ldots,0,1,0,\ldots,0),$$ while other $\delta-\overline{r}$ vectors are with the form $$\mathbf{v}_{j''}=(0,\ldots,0,0,0,\ldots,0,1,0,\ldots,0).$$
As to these $\overline{r}$ vectors, we have $$3\mathbf{v}_{j'}=(0,\ldots,0,6,0,\ldots,0,3,0,\ldots,0), 3\mathbf{v}_{j'}^2=(0,\ldots,0,3,0,\ldots,0,3,0,\ldots,0),$$
thus $\mathbf{v}_j,3\mathbf{v}_j$ and $3\mathbf{v}_j^2$ are linear independent under the Gray map $\Phi$.
However, other $\delta-\overline{r}$ vectors are linear independent.

$(2)$ If $\overline{r}\in \bigg\{\delta+1,\ldots,\delta+\dbinom{\delta}{2}\bigg\}$,
then choosing all $\delta$ columns of $\bar{A}$ and $\overline{r}-\delta$ columns from $\bar{B}$
as $\overline{r}$ columns of $S_r$, and the remaining columns of $S_r$ are all-zero vectors. Namely,
$$S_r=(\overline{A}|\bar{\bar{B}}|\mathbf{0}),$$ where $\bar{\bar{B}}$ is a matrix consisting of $\overline{r}-\delta$ vectors.
In this case, $\{\Phi(3\mathbf{v}_x*\mathbf{v}_y*\mathbf{v}_z)\}_{1\leq x< y<z\leq \delta}$ are all zero vectors, and
$\{\Phi(3\mathbf{v}_x^2*\mathbf{v}_y)\}_{1\leq x< y\leq \delta}$, $\{\Phi(3\mathbf{v}_x*\mathbf{v}_y^2)\}_{1\leq x< y\leq \delta}$
are the same as $\{\Phi(3\mathbf{v}_k*\mathbf{v}_l)\}_{1\leq l< k\leq \delta}$ because of the form of $\bar{B}$.
From Case (1), $\{\Phi(\mathbf{u}_i)\}_{i=1}^\gamma$, $\{\Phi(\mathbf{v}_j)\}_{j=1}^\delta$, $\{\Phi(3\mathbf{v}_j)\}_{j=1}^\delta$, $\{\Phi(3\mathbf{v}_j^2)\}_{j=1}^\delta$
are linear independent.

Next, it also need to prove that there are $\overline{r}-\delta$ linear independent vectors in $\{\Phi(3\mathbf{v}_k*\mathbf{v}_l)\}_{1\leq l< k\leq \delta}$ such that
they are also linear independent to $\{\Phi(\mathbf{u}_i)\}_{i=1}^\gamma$, $\{\Phi(\mathbf{v}_j)\}_{j=1}^\delta$, $\{\Phi(3\mathbf{v}_j)\}_{j=1}^\delta$, $\{\Phi(3\mathbf{v}_j^2)\}_{j=1}^\delta$. By the definition of $\bar{B}$, for fixed $1\leq k< l\leq \delta$, if $\mathbf{h}_k+\mathbf{h}_l$ is in $S_r$,
we denote $\mathbf{h}_k+\mathbf{h}_l$ in the $r'$-th column,
then the $r'$-th coordinate of $3(\mathbf{v}_k*\mathbf{v}_l)$ is $3$ and other coordinates are all zeros.
Therefore, each $\Phi(3(\mathbf{v}_k*\mathbf{v}_l))$ is independent to vectors
$\{\Phi(\mathbf{u}_i)\}_{i=1}^\gamma$, $\{\Phi(\mathbf{v}_j)\}_{j=1}^\delta$, $\{\Phi(3\mathbf{v}_j)\}_{j=1}^\delta$, $\{\Phi(3\mathbf{v}_j^2)\}_{j=1}^\delta$
and $\{\Phi(3(\mathbf{v}_s*\mathbf{v}_t))\},\{s,t\}\neq\{k,l\}$.

$(3)$ If $\overline{r}\in \bigg\{\delta+\dbinom{\delta}{2}+1,\ldots,\delta+2\dbinom{\delta}{2}\bigg\}$,
then choosing all $\delta+\dbinom{\delta}{2}$ columns from the matrices $\bar{A},\bar{B}$, and choosing $\overline{r}-\delta-\dbinom{\delta}{2}$ columns from the matrix $\bar{C}$
as $\overline{r}$ columns of $S_r$, and the remaining columns of $S_r$ are all-zero vectors. Namely,
$$S_r=(\overline{A}|\overline{B}|\bar{\bar{C}}|\mathbf{0}),$$ where $\bar{\bar{C}}$ is a matrix consisting of $\overline{r}-\delta-\dbinom{\delta}{2}$ vectors.

In this case, $\{\Phi(3\mathbf{v}_x*\mathbf{v}_y*\mathbf{v}_z)\}_{1\leq x< y<z\leq \delta}$ are all zero vectors, and
$\{\Phi(3\mathbf{v}_x^2*\mathbf{v}_y)\}_{1\leq x< y\leq \delta}$ are the same as $\{\Phi(3\mathbf{v}_x*\mathbf{v}_y^2)\}_{1\leq x< y\leq \delta}$
by the form of the matrices $\bar{B},\bar{C}$.
Similar to Case $(2)$, by the definition of $\bar{C}$ and the form of $S_r$, all vectors
$\{\Phi(\mathbf{u}_i)\}_{i=1}^\gamma$, $\{\Phi(\mathbf{v}_j)\}_{j=1}^\delta$, $\{\Phi(3\mathbf{v}_j)\}_{j=1}^\delta$, $\{\Phi(3\mathbf{v}_j^2)\}_{j=1}^\delta$, $\{\Phi(3\mathbf{v}_k*\mathbf{v}_l)\}_{1\leq l< k\leq \delta}$
are linear independent, and there are $\overline{r}-\delta-\dbinom{\delta}{2}$ linear independent vectors in $\{\Phi(3\mathbf{v}_x^2*\mathbf{v}_y)\}_{1\leq x< y\leq \delta}$
that are linear independent to all these vectors.

$(4)$ If $\overline{r}\in \bigg\{\delta+2\dbinom{\delta}{2}+1,\ldots,\delta+3\dbinom{\delta}{2}\bigg\}$,
then choosing all $\delta+2\dbinom{\delta}{2}$ columns of matrices $\bar{A},\bar{B},\bar{C}$, and $\overline{r}-\delta-2\dbinom{\delta}{2}$ column vectors from $\bar{D}$
as $\overline{r}$ columns of $S_r$, and the remaining columns of $S_r$ are all-zero vectors.
$$S_r=(\overline{A}|\overline{B}|\overline{C}|\bar{\bar{D}}|\mathbf{0}),$$ where $\bar{\bar{D}}$ is a matrix consisted of $\overline{r}-\delta-2\dbinom{\delta}{2}$ vectors.

In this case, $\{\Phi(3\mathbf{v}_x*\mathbf{v}_y*\mathbf{v}_z)\}_{1\leq x< y<z\leq \delta}$ are still all zero vectors,
but $\{\Phi(3\mathbf{v}_x^2*\mathbf{v}_y)\}_{1\leq x< y\leq \delta}$ are not the same as $\{\Phi(3\mathbf{v}_x*\mathbf{v}_y^2)\}_{1\leq x< y\leq \delta}$
by the form of $\bar{D}$.
Similar to cases $(2)$ and $(3)$, all vectors
$\{\Phi(\mathbf{u}_i)\}_{i=1}^\gamma$, $\{\Phi(\mathbf{v}_j)\}_{j=1}^\delta$, $\{\Phi(3\mathbf{v}_j)\}_{j=1}^\delta$, $\{\Phi(3\mathbf{v}_j^2)\}_{j=1}^\delta$, $\{\Phi(3\mathbf{v}_k*\mathbf{v}_l)\}_{1\leq l< k\leq \delta}$, $\{\Phi(3\mathbf{v}_x^2*\mathbf{v}_y)\}_{1\leq x< y\leq \delta}$
are linear independent,
and there are $\overline{r}-\delta-2\dbinom{\delta}{2}$ linear independent vectors in $\{\Phi(3\mathbf{v}_x*\mathbf{v}_y^2)\}_{1\leq x< y\leq \delta}$
that are linear independent to all these vectors.

$(5)$ If $\overline{r}\in \bigg\{\delta+3\dbinom{\delta}{2}+1,\ldots,\delta+3\dbinom{\delta}{2}+\dbinom{\delta}{3}\bigg\}$,
then choosing all $\delta+3\dbinom{\delta}{2}$ columns of the matrices $\bar{A},\bar{B},\bar{C},\bar{D}$, and $\overline{r}-\delta-3\dbinom{\delta}{2}$ columns from the matrix $\bar{E}$
as $\overline{r}$ columns of $S_r$, and the remaining columns of $S_r$ are all-zero vectors. Namely,
$$S_r=(\overline{A}|\overline{B}|\overline{C}|\overline{D}|\bar{\bar{E}}|\mathbf{0}),$$ where $\bar{\bar{E}}$ is a matrix consisting of $\overline{r}-\delta-3\dbinom{\delta}{2}$ vectors.

In this case, $\{\Phi(3\mathbf{v}_x*\mathbf{v}_y*\mathbf{v}_z)\}_{1\leq x< y<z\leq \delta}$ are not all zero vectors,
by the definition of the matrix $E$ and similar to the discussion of Case (2), all vectors
$\{\Phi(\mathbf{u}_i)\}_{i=1}^\gamma$, $\{\Phi(\mathbf{v}_j)\}_{j=1}^\delta$, $\{\Phi(3\mathbf{v}_j)\}_{j=1}^\delta$, $\{\Phi(3\mathbf{v}_j^2)\}_{j=1}^\delta$, $\{\Phi(3\mathbf{v}_k*\mathbf{v}_l)\}_{1\leq l< k\leq \delta}$, $\{\Phi(3\mathbf{v}_x^2*\mathbf{v}_y)\}_{1\leq x< y\leq \delta}$,
$\{\Phi(3\mathbf{v}_x*\mathbf{v}_y^2)\}_{1\leq x< y\leq \delta}$
are linear independent,
and there are $\overline{r}-\delta-3\dbinom{\delta}{2}$ linear independent vectors in $\{\Phi(3\mathbf{v}_x*\mathbf{v}_y*\mathbf{v}_z)\}_{1\leq x< y<z\leq \delta}$
that are linear independent to all these vectors.

Hence, using this construction, we can get a $\Z_3\Z_{9}$-linear code for each value of $r$.
\end{proof}

Now we give an example to illustrate the main result of Theorem \ref{th:7}. It has been checked by the MAGMA program.
\begin{exmpl}
Consider a $\mathbb{Z}_3 \mathbb{Z}_{9}$-linear code $C$ of type $(\alpha,\beta;2,4;1)$, the generator matrix of $\mathcal{C}=\Phi^{-1}(C)$ is
$$\mathcal{G}=\left(\begin{array}{cc|ccc}1 & T' & \mathbf{0} & 0 & \mathbf{0}\\
0 & \mathbf{0} & 3T_1 & 3 & \mathbf{0}\\
\hline
\mathbf{0} & S' & S_r & \mathbf{0} & I_4
\end{array}\right),$$
where $r\in\{10,11,\ldots,\min(\beta+5,36)\}$ and
$\overline{r}\in\{0,1,\ldots,\min(\beta-5,26)\}$.

According to Theorem \ref{th:7}, let

\begin{center}
$\bar{A}=\begin{pmatrix} 2 & 0 & 0 & 0 \\ 0 & 2 & 0 & 0\\ 0 & 0 & 2 & 0\\ 0 & 0 & 0 & 2 \end{pmatrix}$,
$\bar{B}=\begin{pmatrix} 1 & 1 & 1 & 0 & 0 & 0 \\ 1 & 0 & 0 & 1 & 1 & 0 \\ 0 & 1 & 0 & 0 & 1 & 1 \\ 0 & 0 & 1 & 1 & 0 & 1 \end{pmatrix}$,
$\bar{C}=\begin{pmatrix} 2 & 2 & 2 & 0 & 0 & 0 \\ 2 & 0 & 0 & 2 & 2 & 0 \\ 0 & 2 & 0 & 0 & 2 & 2 \\ 0 & 0 & 2 & 2 & 0 & 2 \end{pmatrix},$

$\bar{D}=\begin{pmatrix} 1 & 1 & 1 & 0 & 0 & 0 \\ 2 & 0 & 0 & 1 & 1 & 0 \\ 0 & 2 & 0 & 0 & 2 & 1 \\ 0 & 0 & 2 & 2 & 0 & 2 \end{pmatrix},$
$\bar{E}=\begin{pmatrix} 1 & 1 & 1 & 0 \\ 1 & 1 & 0 & 1\\ 1 & 0 & 1 & 1\\ 0 & 1 & 1 & 1 \end{pmatrix}$,
\end{center}
and $\mathbf{a}_{j_1},\mathbf{b}_{j_2},\mathbf{c}_{j_3},\mathbf{d}_{j_4},\mathbf{e}_{j_5}$ be the column vectors in $\bar{A},\bar{B},\bar{C},\bar{D},\bar{E}$, respectively,
where $j_1,j_5\in\{1,2,3,4\}$, $j_2,j_3,j_4\in\{1,2,3,4,5,6\}$.
If $\beta=10$, then $r\in\{10,11,12,13,14,15\}$.
Let $S_{10}=\mathbf{0}$, then we have
$$S_{11}=(a_1|\mathbf{0}),
S_{12}=(a_1,a_2|\mathbf{0}),
S_{13}=(a_1,a_2,a_3|\mathbf{0}),$$
$$S_{14}=(a_1,a_2,a_3,a_4|\mathbf{0}),
S_{15}=(\bar{A}|b_1|\mathbf{0}).$$
Then we can get $6$ $\mathbb{Z}_3 \mathbb{Z}_{9}$-linear codes.
That is, for any fixed $\beta$, we can construct $S_r$ by choosing some columns in order from matrices
$\bar{A},\bar{B},\bar{C},\bar{D},\bar{E}$ and other columns are zero vectors.
\end{exmpl}

\begin{remark}
The construction of $S_r$ in the proof of Theorem \ref{th:7} is not unique.
In fact, $S_r$ can be construceted by choosing any $\overline{r}$ different columns from the matrix $M=(\bar{A}|\bar{B}|\bar{C}|\bar{D}|\bar{E})$ with size $\delta\times \bigg(\delta+3\dbinom{\delta}{2}+\dbinom{\delta}{3}\bigg)$.
\end{remark}
\section{Kernel dimension of $\mathbb{Z}_p \mathbb{Z}_{p^2}$-additive codes}\label{sec:4}

In this section, we will explore the dimension of the kernel of $\mathbb{Z}_p\mathbb{Z}_{p^2}$-linear codes $C=\Phi(\mathcal{C})$.

\begin{lemma}\label{l:9}
Let $\mathcal{C}$ be a $\mathbb{Z}_p \mathbb{Z}_{p^2}$-additive code and $C=\Phi(\mathcal{C})$. Then,
$$K(C)=\{\Phi(\mathbf{u})|\mathbf{u}\in\mathcal{C}\text{ \rm{and} } pP'(\mathbf{u},\mathbf{v})\in \mathcal{C}, \forall \mathbf{v}\in\mathcal{C}\}.$$
\end{lemma}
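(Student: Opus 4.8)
The plan is to prove the two inclusions separately, leaning on the additive identity $\Phi(\mathbf{u})+\Phi(\mathbf{v})=\Phi(\mathbf{u}+\mathbf{v}+pP'(\mathbf{u},\mathbf{v}))$ that is already established \emph{inside} the proof of Lemma \ref{l:4} (combine Lemma \ref{l:2} with the Remark, using that $p\,\Phi(pP'(\mathbf{u},\mathbf{v}))=\mathbf{0}$ in $\Z_p^n$). Two structural facts are used throughout: $\Phi$ is injective (since the Gray map $\phi$ is), and $\mathbf{0}\in C$ because $\mathcal{C}$ is a subgroup and $\Phi(\mathbf{0})=\mathbf{0}$. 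In this sense the whole statement is a \emph{localization} of Lemma \ref{l:4}, replacing ``for all $\mathbf{u},\mathbf{v}$'' by ``for a fixed $\mathbf{u}$ and all $\mathbf{v}$''.

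For the inclusion $\supseteq$, I would fix $\mathbf{u}\in\mathcal{C}$ with $pP'(\mathbf{u},\mathbf{v})\in\mathcal{C}$ for every $\mathbf{v}\in\mathcal{C}$, and show $\Phi(\mathbf{u})\in K(C)$. Given any codeword $\Phi(\mathbf{v})\in C$, the identity above rewrites $\Phi(\mathbf{u})+\Phi(\mathbf{v})$ as $\Phi(\mathbf{u}+\mathbf{v}+pP'(\mathbf{u},\mathbf{v}))$; since $\mathcal{C}$ is additive and $pP'(\mathbf{u},\mathbf{v})\in\mathcal{C}$, the argument $\mathbf{u}+\mathbf{v}+pP'(\mathbf{u},\mathbf{v})$ lies in $\mathcal{C}$, so $\Phi(\mathbf{u})+\Phi(\mathbf{v})\in C$. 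This gives $C+\Phi(\mathbf{u})\subseteq C$; equality then follows because translation by a fixed vector is a bijection of $\Z_p^n$, so $|C+\Phi(\mathbf{u})|=|C|$ forces $C+\Phi(\mathbf{u})=C$, i.e.\ $\Phi(\mathbf{u})\in K(C)$.

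For the inclusion $\subseteq$, I would start from $\mathbf{x}\in K(C)$. Since $\mathbf{0}\in C$ and $C+\mathbf{x}=C$, we get $\mathbf{x}\in C$, so $\mathbf{x}=\Phi(\mathbf{u})$ for a (unique, by injectivity) $\mathbf{u}\in\mathcal{C}$. Now fix any $\mathbf{v}\in\mathcal{C}$; membership $\Phi(\mathbf{u})\in K(C)$ gives $\Phi(\mathbf{u})+\Phi(\mathbf{v})\in C$, and the identity rewrites this sum as $\Phi(\mathbf{u}+\mathbf{v}+pP'(\mathbf{u},\mathbf{v}))$. Injectivity of $\Phi$ then yields $\mathbf{u}+\mathbf{v}+pP'(\mathbf{u},\mathbf{v})\in\mathcal{C}$, and subtracting $\mathbf{u}+\mathbf{v}\in\mathcal{C}$ leaves $pP'(\mathbf{u},\mathbf{v})\in\mathcal{C}$, as required.

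No single step is genuinely hard, but the points that demand care are (a) using that $\Phi$ is injective but \emph{not} surjective, so one may only pull membership back along $\Phi$ and must avoid arguing about arbitrary vectors of $\Z_p^n$; and (b) the finiteness/cardinality step that upgrades $C+\Phi(\mathbf{u})\subseteq C$ to equality. I would quote the additive identity in exactly the form proved in Lemma \ref{l:4}, so that the $p\,\Phi(pP'(\mathbf{u},\mathbf{v}))=\mathbf{0}$ cancellation is transparent and the proof reduces to bookkeeping.
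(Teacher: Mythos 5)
Your proof is correct and takes essentially the same route as the paper: the paper disposes of Lemma \ref{l:9} with the one-line remark that it ``follows from Lemma \ref{l:4} and the definition of $K(C)$,'' and your two inclusions are precisely the localization of Lemma \ref{l:4}'s argument (via the identity $\Phi(\mathbf{u})+\Phi(\mathbf{v})=\Phi(\mathbf{u}+\mathbf{v}+pP'(\mathbf{u},\mathbf{v}))$) to a fixed $\mathbf{u}$. The only content you add is the bookkeeping the paper leaves implicit --- injectivity of $\Phi$, $\mathbf{0}\in C$ forcing $K(C)\subseteq C$, and the cardinality step upgrading $C+\Phi(\mathbf{u})\subseteq C$ to equality --- all of which is handled correctly.
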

\begin{proof}
The result follows from Lemma \ref{l:4} and the definition of $K(C)$.
\end{proof}

It's obvious that all codewords of order $p$ in $\C$ under $\Phi$ belong to $K(C)$,
and $\Phi(\mathbf{u})\notin K(C)$ if and only if there is $ \mathbf{v}\in \mathcal{C}$ such that $pP'(\mathbf{u},\mathbf{v})\notin \mathcal{C}$.


\begin{lemma}\label{l:10}
Let $\C$ be a $\Z_p \Z_{p^2}$-additive code and $C=\Phi(\C)$.
If $\Phi(\mathbf{x}_1)+\Phi(\mathbf{x}_2)+\ldots+\Phi(\mathbf{x}_n)\in K(C)$ for $\mathbf{x}_1,\mathbf{x}_2,\ldots,\mathbf{x}_n\in \C$,
then $\Phi(\mathbf{x}_1+\mathbf{x}_2+\ldots+\mathbf{x}_n)\in K(C)$.
\end{lemma}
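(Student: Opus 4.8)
The plan is to verify the intrinsic characterization of the kernel from Lemma~\ref{l:9} for the single codeword $\mathbf{x}=\mathbf{x}_1+\cdots+\mathbf{x}_n$, which belongs to $\mathcal{C}$ because $\mathcal{C}$ is additive; so it suffices to show that $pP'(\mathbf{x},\mathbf{w})\in\mathcal{C}$ for every $\mathbf{w}\in\mathcal{C}$. The structural observation driving everything is that $K(C)\subseteq C$: since $\mathbf{0}\in C$, any $\mathbf{w}\in K(C)$ equals $\mathbf{0}+\mathbf{w}\in C+\mathbf{w}=C$. Thus the hypothesis $\Phi(\mathbf{x}_1)+\cdots+\Phi(\mathbf{x}_n)\in K(C)$ does more than assert a translation invariance: it guarantees that this sum is the Gray image of an honest codeword, to which Lemma~\ref{l:9} can be applied.

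First I would show that a sum of Gray images is again a single Gray image with controlled low digits. Using Lemma~\ref{l:2} in the rearranged form $\Phi(\mathbf{a})+\Phi(\mathbf{b})=\Phi(\mathbf{a}+\mathbf{b})-\Phi(pP(\mathbf{a},\mathbf{b}))$, and noting that the carry between an arbitrary vector and a multiple of $p$ vanishes (so that $\Phi$ is additive on multiples of $p$), one obtains $\Phi(\mathbf{a})+\Phi(\mathbf{b})=\Phi\bigl(\mathbf{a}+\mathbf{b}-pP(\mathbf{a},\mathbf{b})\bigr)$, where $\mathbf{a}+\mathbf{b}-pP(\mathbf{a},\mathbf{b})\equiv\mathbf{a}+\mathbf{b}\pmod{p}$. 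Iterating over the $n$ summands yields
\[
\Phi(\mathbf{x}_1)+\cdots+\Phi(\mathbf{x}_n)=\Phi(\mathbf{y}),\qquad \mathbf{y}\equiv\mathbf{x}_1+\cdots+\mathbf{x}_n\pmod{p}.
\]
Now the hypothesis reads $\Phi(\mathbf{y})\in K(C)$; by $K(C)\subseteq C$ and injectivity of $\Phi$ this forces $\mathbf{y}\in\mathcal{C}$, and Lemma~\ref{l:9} then gives $pP'(\mathbf{y},\mathbf{w})\in\mathcal{C}$ for all $\mathbf{w}\in\mathcal{C}$.

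Finally I would transfer this condition from $\mathbf{y}$ to $\mathbf{x}$. By the formula $P(u_i,v_i)=P(u_i',v_i')$ in Lemma~\ref{l:2} and the Remark, the function $P$—and hence $P'$—depends only on the low digits of its arguments, that is, only on their residues modulo $p$. Since $\mathbf{x}=\mathbf{x}_1+\cdots+\mathbf{x}_n\equiv\mathbf{y}\pmod{p}$, we get $P'(\mathbf{x},\mathbf{w})=P'(\mathbf{y},\mathbf{w})$ and therefore $pP'(\mathbf{x},\mathbf{w})=pP'(\mathbf{y},\mathbf{w})\in\mathcal{C}$ for every $\mathbf{w}\in\mathcal{C}$. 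Lemma~\ref{l:9} then delivers $\Phi(\mathbf{x})=\Phi(\mathbf{x}_1+\cdots+\mathbf{x}_n)\in K(C)$, which is the claim.

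I expect the main obstacle to be the bookkeeping in the middle step: one must be certain that collapsing the telescoping chain of correction terms $\Phi(pP(\cdot,\cdot))$ really produces a single Gray image $\Phi(\mathbf{y})$, and that all accumulated carries are multiples of $p$, so that $\mathbf{y}$ and $\mathbf{x}$ share the same reduction modulo $p$. Once this is established, the remainder is immediate from the two conceptual facts used—the mod-$p$ invariance of $P'$ and the containment $K(C)\subseteq C$—so the verification of the Lemma~\ref{l:9} criterion for $\mathbf{x}$ requires no further computation.
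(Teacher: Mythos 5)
Your argument is correct and rests on the same two facts as the paper's proof: the identity $\Phi(\mathbf{a})+\Phi(\mathbf{b})=\Phi\bigl(\mathbf{a}+\mathbf{b}+pP'(\mathbf{a},\mathbf{b})\bigr)$ (the paper does $n=2$ and invokes induction where you iterate directly) and the observation that $P'$ depends only on residues modulo $p$, so the criterion of Lemma~\ref{l:9} transfers from $\mathbf{y}$ to $\mathbf{x}_1+\cdots+\mathbf{x}_n$. Your explicit justification that $\mathbf{y}\in\mathcal{C}$ via $K(C)\subseteq C$ and the injectivity of $\Phi$ is a detail the paper leaves implicit, but the approach is essentially identical.
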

\begin{proof}
It is enough to prove that $n=2$. The general $n$ can be obtained by induction.
By Lemma \ref{l:2}, we have
\begin{eqnarray*}
  \Phi(\x_1+\x_2+pP'(\x_1,\x_2)) &=& \Phi(\x_1+\x_2)+\Phi(pP'(\x_1+\x_2)) \\
   &=& \Phi(\x_1)+\Phi(\x_2)+p\Phi(pP'(\x_1,\x_2))=\Phi(\x_1)+\Phi(\x_2).
\end{eqnarray*}
By Lemma \ref{l:9}, if $\Phi(\x_1+\x_2+pP'(\x_1,\x_2))\in K(C)$, then for all $\vv\in \C$, we have $pP'((\x_1+\x_2+pP'(\x_1,\x_2)),\vv)=pP'((\x_1+\x_2),\vv)\in \C$, that is,
$\Phi(\x_1+\x_2)\in K(C)$.
\end{proof}

From this lemma, we can conclude that for $\x\in \C$ and $a\in \Z_{p^2}$, if $a\Phi(\x)\in K(C)$, then $\Phi(a\x)\in K(C)$,
but the inverse does not have to be true unless the order of $\mathbf{x}$ is $p$.

\begin{proposition}\label{prop:11}
Let $\C$ be a $\Z_p\Z_{p^2}$-additive code of type $(\alpha,\beta;\gamma,\delta;\kappa)$, with generator matrix $\G$, and $C=\Phi(\C)$ with $ker(C)=\gamma+2\delta-\overline{k}$, where $\overline{k}\in \{1,2,\ldots,\delta\}$.
Then, there exists a set $\{\mathbf{v}_1,\mathbf{v}_2,\ldots,\mathbf{v}_{\overline{k}}\}$ of row vectors of order $p^2$ in $\G$ and $\Phi(\vv_i)\notin K(C)$, such that for all $a_i \in \Z_p$,
$$C=\bigcup_{a_i\in \Z_p}\bigg(K(C)+\Phi\bigg(\sum_{i=1}^{\overline{k}}a_i\mathbf{v}_i\bigg)\bigg).$$
\end{proposition}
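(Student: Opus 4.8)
The plan is to pass to the additive side by studying the subgroup $\K=\{\mathbf{c}\in\C:\Phi(\mathbf{c})\in K(C)\}$ of $\C$, and then to match cosets of $\K$ in $\C$ with cosets of $K(C)$ in $C$ through a counting argument. First I would record that $\K$ is a subgroup: if $\Phi(\mathbf{c}_1),\Phi(\mathbf{c}_2)\in K(C)$ then $\Phi(\mathbf{c}_1)+\Phi(\mathbf{c}_2)\in K(C)$ since $K(C)$ is linear, whence $\Phi(\mathbf{c}_1+\mathbf{c}_2)\in K(C)$ by Lemma~\ref{l:10}. Since $\Phi$ is a bijection on $\C$ and $K(C)\subseteq C$ (because $\mathbf{0}\in C$), we get $K(C)=\Phi(\K)$, so $|\K|=|K(C)|=p^{\gamma+2\delta-\overline{k}}$ and $[\C:\K]=p^{\overline{k}}$. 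By Lemma~\ref{l:9}, $\K$ is exactly the set of $\mathbf{u}$ with $pP'(\mathbf{u},\mathbf{v})\in\C$ for all $\mathbf{v}\in\C$; in particular the order-$p$ subcode $\C_p$ lies in $\K$.

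Next I would locate the $\mathbf{v}_i$. Since $\C\cong\Z_p^\gamma\times\Z_{p^2}^\delta$, the quotient $\C/\C_p$ is isomorphic to $\Z_p^\delta$ with the images of the order-$p^2$ rows $\mathbf{v}_1,\dots,\mathbf{v}_\delta$ as a basis. As $\C_p\subseteq\K$, the group $\C/\K$ is a quotient of $\C/\C_p$, so it is a $\overline{k}$-dimensional $\Z_p$-space spanned by the images of $\mathbf{v}_1,\dots,\mathbf{v}_\delta$. Extracting a basis from this spanning set selects $\overline{k}$ of these rows, which I relabel $\mathbf{v}_1,\dots,\mathbf{v}_{\overline{k}}$, whose images form a basis of $\C/\K$; equivalently $\big\{\sum_{i=1}^{\overline{k}}a_i\mathbf{v}_i: a_i\in\Z_p\big\}$ is a complete set of coset representatives of $\K$ in $\C$. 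Each chosen $\mathbf{v}_i$ is nonzero in $\C/\K$, i.e. $\mathbf{v}_i\notin\K$, which is precisely $\Phi(\mathbf{v}_i)\notin K(C)$, as required by the statement.

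The technical heart is a well-definedness claim: if $\mathbf{s}-\mathbf{t}\in\K$ then $\Phi(\mathbf{s})-\Phi(\mathbf{t})\in K(C)$. Writing $\mathbf{s}=\mathbf{t}+\mathbf{w}$ with $\mathbf{w}\in\K$, Lemma~\ref{l:2} gives $\Phi(\mathbf{s})=\Phi(\mathbf{t})+\Phi(\mathbf{w})+\Phi(pP(\mathbf{t},\mathbf{w}))$. Here $\Phi(\mathbf{w})\in K(C)$ since $\mathbf{w}\in\K$; the subtle term is the last one. Because $P$ is symmetric in its two arguments and $P=(p-1)P'$, the membership $\mathbf{w}\in\K$ yields $pP(\mathbf{t},\mathbf{w})=(p-1)\,pP'(\mathbf{w},\mathbf{t})\in\C$, and this is an order-$p$ word, so $pP(\mathbf{t},\mathbf{w})\in\C_p\subseteq\K$ and $\Phi(pP(\mathbf{t},\mathbf{w}))\in K(C)$. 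Hence $\Phi(\mathbf{s})-\Phi(\mathbf{t})\in K(C)$. This is the step I expect to be the main obstacle: one must notice that $pP(\mathbf{t},\mathbf{w})$ actually lands in $\C$, and without the characterization of $\K$ from Lemma~\ref{l:9} together with the symmetry of $P$ this fails, since $pP$ of two arbitrary codewords need not lie in $\C$.

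The claim shows that $\bar\Phi:\C/\K\to\Z_p^n/K(C)$, $\mathbf{c}+\K\mapsto\Phi(\mathbf{c})+K(C)$, is a well-defined map. Its image consists of cosets of $K(C)$ contained in $C=\Phi(\C)$, and every such coset $K(C)+\Phi(\mathbf{c})$ is hit, so $\bar\Phi$ surjects onto the set of $p^{\overline{k}}$ cosets of $K(C)$ that make up $C$. As the domain $\C/\K$ also has cardinality $p^{\overline{k}}$, the surjection $\bar\Phi$ is a bijection, hence injective. Applying injectivity to the $p^{\overline{k}}$ distinct representatives $\sum a_i\mathbf{v}_i$ shows the cosets $K(C)+\Phi\big(\sum_{i=1}^{\overline{k}}a_i\mathbf{v}_i\big)$ are pairwise distinct; being $p^{\overline{k}}$ distinct cosets of $K(C)$ inside $C$, they exhaust $C$, which is the claimed decomposition. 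I would emphasise that this pigeonhole step is exactly what lets me avoid proving the converse implication $\Phi(\mathbf{s})-\Phi(\mathbf{t})\in K(C)\Rightarrow \mathbf{s}-\mathbf{t}\in\K$ directly, which would otherwise require an awkward analysis of $\Phi$ on differences.
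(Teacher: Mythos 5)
Your proof is correct. It follows the paper's overall strategy --- count the $p^{\overline{k}}$ cosets of $K(C)$ in $C$, exhibit $\overline{k}$ order-$p^2$ rows whose $\Z_p$-span supplies $p^{\overline{k}}$ representatives, and conclude by cardinality --- but the key technical step is handled in the opposite direction. The paper proves injectivity directly: if two representatives lie in the same coset, Lemma \ref{l:10} converts $\Phi(\tilde{\vv}_1)-\Phi(\tilde{\vv}_2)\in K(C)$ into $\Phi(\tilde{\vv}_1-\tilde{\vv}_2)\in K(C)$, and a linear-independence count inside $K(C)$ (built from the order-$p$ codewords together with $\delta-\overline{k}$ order-$p^2$ codewords $\omega_i$) forces $\tilde{\vv}_1=\tilde{\vv}_2$. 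You instead prove the converse implication --- well-definedness of the induced map $\C/\K\to C/K(C)$, using Lemma \ref{l:9}, the symmetry of $P$, and the observation that $pP(\mathbf{t},\mathbf{w})$ is an order-$p$ codeword of $\C$ --- and recover injectivity by pigeonhole from surjectivity and the equality of cardinalities. Your route buys a cleaner selection of the $\vv_i$ (as a basis of the $\Z_p$-space $\C/\K$, which immediately yields $\Phi(\vv_i)\notin K(C)$ and the pairwise distinctness of the representatives modulo $\K$), and it avoids having to apply the implication ``$\Phi(\tilde{\vv})\in K(C)\Rightarrow\tilde{\vv}=0$'' to the combination $\tilde{\vv}_1+(p^2-1)\tilde{\vv}_2$, whose coefficients a priori range over $\Z_{p^2}$ rather than $\{0,\ldots,p-1\}$ --- a point the paper's argument passes over. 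The paper's route is slightly more economical in that it reuses Lemma \ref{l:10} wholesale instead of re-deriving a coset-invariance fact from Lemmas \ref{l:2} and \ref{l:9}.
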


\begin{proof}
Denote $\tilde{\mathbf{v}}=\sum_{i=1}^{\overline{k}}a_i\mathbf{v}_i, a_i\in \Z_p$.
It is obvious that $C$ can be written as the union of some cosets of $K(C)$.
Since $|C|=p^{\gamma+2\delta}$ and $ker(C)=\gamma+2\delta-\overline{k}$, we know that there are $p^{\overline{k}}$ such cosets.
Firstly, if $\tilde{\mathbf{v}}\neq 0$, then $\Phi(\tilde{\mathbf{v}})\notin K(C)$.
Let $\mathbf{u}_1,\ldots,\mathbf{u}_\gamma,\mathbf{v}_1,\ldots,\mathbf{v}_\delta$ be $\gamma$ and $\delta$ rows in $\G$ of order $p$ and $p^2$, respectively.
By Lemma \ref{l:9}, all codewords of order $p$ in $\C$ under $\Phi$ belong to $K(C)$.
So, there are $p^{\gamma+\delta}$ codewords of order $p$ in $K(C)$ generated by $\gamma+\delta$ codewords,
and there are $\delta-\overline{k}$ codewords $\omega_i,i\in\{1,2,\ldots,\delta-\overline{k}\}$ of order $p^2$ such that $\Phi(\omega_i)\in K(C)$.
It is obvious that $\Phi(\mathbf{u}_1),\ldots,\Phi(\mathbf{u}_\gamma),\Phi(p\mathbf{v}_1),\ldots,\Phi(p\mathbf{v}_\delta)$
,$\Phi(\omega_1),\ldots,\Phi(\omega_{\delta-\overline{k}})$ are linear independent,
and the code $\C$ can also be generated by $\mathbf{u}_1,\ldots,\mathbf{u}_\gamma,\omega_1,\ldots,\omega_{\delta-\overline{k}},\mathbf{v}_{i_1},\ldots,\mathbf{v}_{i_{\overline{k}}}$,
where $\{i_1,i_2,\ldots,i_{\overline{k}}\}\subseteq \{1,2,\ldots,\delta\}$.
Without loss of generality, assume that $\mathbf{v}_{i_1},\ldots,\mathbf{v}_{i_{\overline{k}}}$ are the $\overline{k}$ row vectors $\mathbf{v}_{1},\ldots,\mathbf{v}_{\overline{k}}$ in $\G$.
If $\Phi(\tilde{\mathbf{v}})\in K(C)$, then $\Phi(\mathbf{u}_1),\ldots,\Phi(\mathbf{u}_\gamma),\Phi(p\mathbf{v}_1),\ldots,\Phi(p\mathbf{v}_\delta)$
,$\Phi(\omega_1),\ldots,\Phi(\omega_{\delta-\overline{k}})$ and $\Phi(\tilde{\mathbf{v}})$ are linearly independent, which contradicts with the size of $K(C)$.
That is, $\Phi(\tilde{\mathbf{v}})\in K(C)$ implies that $\tilde{\mathbf{v}}=0.$
Then, it's sufficient to prove that $p^{\overline{k}}-1$ nonzero vectors $\Phi(\sum_{i=1}^{\overline{k}}a_i\mathbf{v}_i)$ are from different cosets, where $a_i\in \Z_p$.
If $\tilde{\mathbf{v}}_1\neq \tilde{\mathbf{v}}_2$, then $\Phi(\tilde{\mathbf{v}}_1)\neq \Phi(\tilde{\mathbf{v}}_2)$ because $\Phi$ is injective.
Suppose that $\Phi(\tilde{\mathbf{v}}_1)$ and $\Phi(\tilde{\mathbf{v}}_2)$ are in the same coset, that is, $\Phi(\tilde{\mathbf{v}}_1)=\Phi(\tilde{\mathbf{v}}_2)+K(C)$.
Hence, $\Phi(\tilde{\mathbf{v}}_1)-\Phi(\tilde{\mathbf{v}}_2)\in K(C)$, and then $\Phi(\tilde{\mathbf{v}}_1)+(p^2-1)\Phi(\tilde{\mathbf{v}}_2)\in K(C)$.
Thus $\Phi(\tilde{\mathbf{v}}_1+(p^2-1)\tilde{\mathbf{v}}_2)\in K(C)$ by Lemma \ref{l:10}.
Thus, $\tilde{\mathbf{v}}_1+(p^2-1)\tilde{\mathbf{v}}_2=\tilde{\mathbf{v}}_1-\tilde{\mathbf{v}}_2=0$, and then $\tilde{\mathbf{v}}_1=\tilde{\mathbf{v}}_2.$
\end{proof}

Note that if $C$ is a $\Z_p\Z_{p^2}$-linear code, then $K(C)$ is a $\Z_p\Z_{p^2}$-linear subcode of $C$ by Lemma \ref{l:10}.
We define the \emph{kernel} of a $\Z_p\Z_{p^2}$-additive code $\C$ of type $(\alpha,\beta;\gamma,\delta;\kappa)$ as $\K(\C)=\Phi^{-1}(K(C))$. According to Lemma \ref{l:9}, $\K(\C)=\{\mathbf{u}\in \C|pP'(\mathbf{u},\vv)\in \C \text{ for all }\mathbf{v}\in \C\}$.
It's easy to check that $\K(\C)$ is a $\Z_p\Z_{p^2}$-additive subcode of $\C$ of type $(\alpha,\beta;\gamma+\overline{k},\delta-\overline{k};\kappa)$.

By Proposition \ref{prop:11}, if $\C$ is a $\Z_p\Z_{p^2}$-additive code with generator matrix $\G$,
then there exists a set $\{\mathbf{v}_1,\mathbf{v}_2,\ldots,\mathbf{v}_{\overline{k}}\}$ of row vectors of order $p^2$ in $\G$, such that for all $a_i \in \Z_p$,
$$\C=\bigcup_{a_i\in \Z_p}\bigg(\K(\C)+\sum_{i=1}^{\overline{k}}a_i\mathbf{v}_i\bigg).$$

\begin{lemma}\label{l:12}
Let $C$ be a $\Z_p\Z_{p^2}$-linear code of type $(\alpha,\beta;\gamma,\delta;\kappa)$ and length $n=\alpha+p\beta$.
Then, $\ker(C)\in\{\gamma+\delta,\gamma+\delta+1,\ldots,\gamma+2\delta\}$.
\end{lemma}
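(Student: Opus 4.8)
The plan is to control $\ker(C)$ from both sides through the kernel subcode $\K(\C)=\Phi^{-1}(K(C))$ defined above. First I would record the two structural facts that make the proof essentially a size count. Since $C+\x=C$ for every $\x\in K(C)$, the set $K(C)$ is closed under addition and contains $\mathbf{0}$, so it is a subgroup of $\Z_p^n$, hence a $\Z_p$-subspace; thus $|K(C)|=p^{\ker(C)}$. Moreover $\Phi$ is a bijection and $K(C)\subseteq C=\Phi(\C)$, so $\Phi(\K(\C))=K(C)$ and $|K(C)|=|\K(\C)|$. Everything therefore reduces to squeezing $|\K(\C)|$ between $p^{\gamma+\delta}$ and $p^{\gamma+2\delta}$.

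For the upper bound I would observe that $\mathbf{0}\in C$, so $K(C)=K(C)+\mathbf{0}\subseteq C$; equivalently, $C$ is a disjoint union of cosets of the subgroup $K(C)$, whence $|K(C)|$ divides $|C|=p^{\gamma+2\delta}$. In either formulation $|\K(\C)|=|K(C)|\le p^{\gamma+2\delta}$, giving $\ker(C)\le\gamma+2\delta$. For the lower bound the key point is that the subcode $\C_p$ of codewords of order dividing $p$, which has $p^{\gamma+\delta}$ elements, is contained in $\K(\C)$. By Lemma \ref{l:9} it suffices to check that $pP'(\mathbf{u},\mathbf{v})\in\C$ for every $\mathbf{u}\in\C_p$ and every $\mathbf{v}\in\C$. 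If $\mathbf{u}\in\C_p$, then each $\Z_{p^2}$-coordinate of $\mathbf{u}$ is a multiple of $p$, so $u'_i=0$ in the notation of Lemma \ref{l:2}, and therefore $P(u'_i,v'_i)=0$ for every $v'_i$ because $u'_i+v'_i=v'_i<p$. Hence $pP'(\mathbf{u},\mathbf{v})=\mathbf{0}\in\C$, so $\C_p\subseteq\K(\C)$ and $|\K(\C)|\ge|\C_p|=p^{\gamma+\delta}$, i.e. $\ker(C)\ge\gamma+\delta$. Combining the two bounds yields $\ker(C)\in\{\gamma+\delta,\gamma+\delta+1,\ldots,\gamma+2\delta\}$.

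The only genuinely delicate step is the lower bound: one must verify that $P$ (hence $P'$) vanishes the moment one argument comes from $\C_p$, and that $|\C_p|=p^{\gamma+\delta}$ exactly, as recorded in the preliminaries. The translation between $|K(C)|$ and $\ker(C)$, and the upper bound, are then routine once $K(C)$ has been recognized as a $\Z_p$-subspace of $\Z_p^n$ contained in $C$. I note that the same conclusion also drops out of the kernel structure discussion following Proposition \ref{prop:11}, where $\K(\C)$ has type $(\alpha,\beta;\gamma+\overline{k},\delta-\overline{k};\kappa)$ with $\overline{k}\in\{0,1,\ldots,\delta\}$, so that $\ker(C)=\gamma+2\delta-\overline{k}$ ranges exactly over the claimed interval; I would present the self-contained size argument above as the cleaner route.
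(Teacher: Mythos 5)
Your proof is correct and follows essentially the same route as the paper: the lower bound comes from the fact that all codewords of order dividing $p$ map into $K(C)$ (which you justify carefully via $u'_i=0$ forcing $P(\mathbf{u},\mathbf{v})=\mathbf{0}$, a detail the paper merely asserts as obvious after Lemma \ref{l:9}), and the upper bound from $K(C)$ being a subspace contained in $C$, so $|K(C)|\le |C|=p^{\gamma+2\delta}$. The paper's own proof is a two-line version of exactly this argument, so no further comparison is needed.
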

\begin{proof}
Because all codewords of $\C$ with order $p$ under $\Phi$ belong to $K(C)$, thus $ker(C)\geq\gamma+\delta$.
If $C$ is linear, then $ker(C)=\gamma+2\delta$. This completes the proof.
\end{proof}

\begin{theorem}\label{Th:13}
Let $\alpha,\beta,\gamma,\delta,\kappa$ be integers satisfying (\ref{eq:1}), then we have the following two assertions:
\begin{enumerate}
  \item [\rm (i)] Let $C$ be a $\mathbb{Z}_p \mathbb{Z}_{p^2}$-linear code $C$ of type $(\alpha,\beta;\gamma,\delta;\kappa)$,
then $ker(C)=\gamma+2\delta-\overline{k}$, where $\overline{k} \in\{0,1,\ldots,\delta\}$.
  \item [\rm (ii)] If $\overline{k} \in\{0,1,\ldots,\delta\}$,
 then there exists a $\mathbb{Z}_p\mathbb{Z}_{p^2}$-linear code $C$ of type $(\alpha,\beta;\gamma,\delta;\kappa)$ with $ker(C)=\gamma+2\delta-\overline{k}$.
  \end{enumerate}
\end{theorem}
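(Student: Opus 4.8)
The plan is to handle the two assertions separately: (i) comes almost for free from the bounds already established, and the genuine work lies in the explicit constructions of (ii).

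For (i) I would simply invoke Lemma \ref{l:12}: any $\Z_p\Z_{p^2}$-linear code $C$ of the prescribed type satisfies $\gamma+\delta\le ker(C)\le\gamma+2\delta$, the lower bound because every order-$p$ codeword lands in $K(C)$ by Lemma \ref{l:9}, and the upper bound being attained exactly when $C$ is linear. Writing $ker(C)=\gamma+2\delta-\overline{k}$ then forces $\overline{k}\in\{0,1,\ldots,\delta\}$, which is precisely (i). I would also record, for reuse in (ii), that by Proposition \ref{prop:11} the number $\overline{k}$ equals the number of independent order-$p^2$ rows of $\G$ whose Gray images leave the kernel, so that $C$ is the disjoint union of $p^{\overline{k}}$ cosets of $K(C)$ indexed by $\Phi\bigl(\sum_i a_i\mathbf{v}_i\bigr)$.

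For (ii) the strategy is to build, for each target $\overline{k}$, an explicit generator matrix in the reduced shape of Theorem \ref{th:7},
$$\mathcal{G}=\left(\begin{array}{cc|ccc}I_\kappa & T' & \mathbf{0} & \mathbf{0} & \mathbf{0}\\
\mathbf{0} & \mathbf{0} & \mathbf{0} & pI_{\gamma-\kappa} & \mathbf{0}\\
\hline
\mathbf{0} & \mathbf{0} & S_k & \mathbf{0} & I_\delta
\end{array}\right),$$
and to choose $S_k$ over $\Z_{p^2}$ so that exactly $\overline{k}$ of the order-$p^2$ rows $\mathbf{v}_1,\ldots,\mathbf{v}_\delta$ leave the kernel. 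Concretely I would give each of $\mathbf{v}_1,\ldots,\mathbf{v}_{\overline{k}}$ a single \emph{trap} entry equal to $p-1$ in a private column of $S_k$ (the general-$p$ analogue of the matrix $\bar A=(p-1)I_\delta$ used in Theorem \ref{th:7}, where $p-1=2$), while the remaining rows $\mathbf{v}_{\overline{k}+1},\ldots,\mathbf{v}_\delta$ are supported only on their identity coordinate in the $I_\delta$ block. The reason to zero out $T_1$ and keep the trap columns untouched by the order-$p$ rows is that at such a clean column the value of any codeword is pinned by the coefficient of the unique $\mathbf{v}_j$ meeting it, while that same coefficient is simultaneously pinned by the identity block.

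The verification then runs through Lemma \ref{l:9}. For a trap row I would take $\mathbf{w}=\mathbf{v}_j$ and use the carry formula of Lemma \ref{l:2} to compute $pP'(\mathbf{v}_j,\mathbf{v}_j)$: it is concentrated on the trap column with value $p(p-1)$ there and vanishes on every identity coordinate, so matching it against an arbitrary codeword forces all identity coordinates, hence all order-$p^2$ coefficients, to vanish, which drives the trap value to $0\neq p(p-1)$; thus $pP'(\mathbf{v}_j,\mathbf{v}_j)\notin\mathcal{C}$ and $\Phi(\mathbf{v}_j)\notin K(C)$. For a remaining row, its single-coordinate support makes every $pP'(\mathbf{v}_l,\mathbf{w})$ a multiple of $p$ concentrated on the identity coordinate, hence a multiple of $p\mathbf{v}_l\in\mathcal{C}$, so $\Phi(\mathbf{v}_l)\in K(C)$. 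Combined with the order-$p$ codewords this shows $\K(\C)$ is additive of type $(\alpha,\beta;\gamma+\overline{k},\delta-\overline{k};\kappa)$, giving the lower bound $ker(C)\ge(\gamma+\overline{k})+2(\delta-\overline{k})=\gamma+2\delta-\overline{k}$. For the matching upper bound I would check, exactly as in Proposition \ref{prop:11} (using injectivity of $\Phi$ and Lemma \ref{l:10}), that every \emph{nonzero} combination $\sum_{i=1}^{\overline{k}}a_i\mathbf{v}_i$ still exhibits a unit residue at some dedicated trap column and therefore also escapes the kernel; this separates the $p^{\overline{k}}$ cosets and forces $ker(C)\le\gamma+2\delta-\overline{k}$.

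The main obstacle I anticipate is twofold. First, one must guarantee that enough clean columns are actually available to host the $\overline{k}$ traps, i.e. that the $\mathbb{Z}_{p^2}$-block of $\G$ has room beyond the $I_\delta$ and $pI_{\gamma-\kappa}$ parts; this is exactly where the slack in the inequality $\delta+\gamma\le\beta+\kappa$ of \eqref{eq:1} must be used, and the tight regime $\delta+\gamma=\beta+\kappa$ is the delicate case that has to be examined carefully to confirm the full range of $\overline{k}$ remains attainable. Second, the exactness of the kernel, namely that no nontrivial combination of trap rows re-enters $K(C)$, is not automatic: it relies on using \emph{dedicated} rather than shared trap columns, since only then does each nonzero combination retain a unit residue at one of its traps and thus fail the Lemma \ref{l:9} criterion. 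The rest of the argument is bookkeeping with the carry function $P'$ and the counting already packaged in Lemma \ref{l:12} and Proposition \ref{prop:11}.
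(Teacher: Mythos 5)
Your overall architecture is the same as the paper's: part (i) is read off from Lemma \ref{l:12}, and part (ii) is proved by exhibiting a generator matrix in reduced form whose block $S_k$ carries entries equal to $p-1$ in its first $\overline{k}$ rows, followed by a verification through Lemma \ref{l:9} and the carry formula of Lemma \ref{l:2} that the kernel is exactly the subgroup generated by the order-$p$ codewords together with $\mathbf{v}_{\overline{k}+1},\dots,\mathbf{v}_\delta$. The difference is how the $(p-1)$-entries are distributed, and this is where your argument has a genuine gap. You put each trap in a \emph{private} column of $S_k$, so your construction needs $s:=\beta-(\gamma-\kappa)-\delta\ge\overline{k}$ columns, while condition \eqref{eq:1} only guarantees $s\ge 0$ and Theorem \ref{Th:13}(ii) claims the full range $\overline{k}\in\{0,1,\dots,\delta\}$ for every admissible type. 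When $1\le s<\overline{k}$ (for instance $\beta=\gamma-\kappa+\delta+1$ and $\overline{k}=\delta\ge 2$) your matrix cannot be formed at all. You flag this as the delicate case to be examined, but you leave it unresolved, and your closing assertion that kernel exactness ``relies on using dedicated rather than shared trap columns'' points away from the fix: the shared column is precisely what closes the gap.

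The paper takes every column of $S_k$ equal to $(p-1,\dots,p-1,0,\dots,0)^{T}$ (first $\overline{k}$ entries equal to $p-1$), so a single column suffices for any $\overline{k}\le\delta$, and separation of the $p^{\overline{k}}$ cosets survives the sharing. Concretely, for $\mathbf{d}\in\C$ carrying a unit coefficient $b_{j_1}$ on a trap row, one tests against $\mathbf{c}'=(p-1)\mu_{j_1}$: the vector $pP'(\mathbf{d},\mathbf{c}')$ equals $p(p-1)^{-1}=-p$ at the identity coordinate of $\mu_{j_1}$ and $0$ or $-p$ at the shared trap column, whereas any codeword of $\C$ with that identity-block profile must take the value $p$ at the trap column; since $p>2$, $p\ne -p$ in $\Z_{p^2}$, so $pP'(\mathbf{d},\mathbf{c}')\notin\C$. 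Thus a shared trap does not let nontrivial combinations of trap rows re-enter $K(C)$, contrary to your concern, and it is the construction that actually covers the parameter range your version misses. (Your instinct about the tight regime is not baseless, though: when $s=0$ neither construction applies, and in that case every code of the given type has linear Gray image, so $\overline{k}\ge 1$ is unattainable; that is a defect of the statement itself rather than of either proof.)
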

\begin{proof}
Let $\C$ be a $\Z_p\Z_{p^2}$-additive code of type $(\alpha,\beta;\gamma,\delta;\kappa)$ with generator matrix
$$\mathcal{G}=\left(\begin{array}{cc|ccc}I_\kappa & T' & \mathbf{0} & \mathbf{0} & \mathbf{0}\\
\mathbf{0} & \mathbf{0} & \mathbf{0} & pI_{\gamma-\kappa} & \mathbf{0}\\
\hline
\mathbf{0} & S' & S_k & \mathbf{0} & I_\delta
\end{array}\right),$$
where $S_k\in \Z_{p^2}^{\delta\times s}$ for $s=\beta-(\gamma-\kappa)-\delta$, and $\{\mathbf{u}_i\}_{i=1}^\gamma$, $\{\mathbf{v}_j\}_{j=1}^\delta$ are the row vectors of the matrix $\G$ with order $p$ and $p^2$, respectively.
Let $\{\mu_i\}_{i=1}^{\gamma+\delta}$ be the rows of $\G$.
In fact, $\mu_i=\mathbf{u}_i$ for $i\in \{1,\ldots,\gamma\}$ and $\mu_i=\mathbf{v}_j$ for $i\in\{\gamma+1,\ldots,\gamma+\delta\}$ and $j\in \{1,\ldots,\delta\}$.
Let $C=\Phi(\C)$ with $ker(C)=k=\gamma+2\delta-\overline{k}$. Then (i) follows from Lemma \ref{l:12}.
For (ii), it is sufficient to prove that if the conditions in this theorem hold,
then we can construct a $\Z_p\Z_{p^2}$-additive code $\C$ such that $C=\Phi(\C)$ with $ker(C)=\gamma+2\delta-\overline{k}$.
According to the structure of the matrix $\mathcal{G}$, we can construct $\C$ by constructing $S_k$ as follows.

(1) If $\overline{k}=0$, the result is trivial.

(2) If $\overline{k}\geq 1$, setting all elements in the first $\overline{k}$ rows of $S_k$ is $p-1$, and zeros elsewhere.

$$S_k=\begin{array}{@{}c@{}c@{}c@{}c@{}c@{}c@{}}
  \left(\begin{array}{c} p-1 \\ \vdots \\ p-1\\ 0 \\ \vdots \\ 0 \end{array}\right.
& \begin{array}{c} p-1 \\ \vdots \\ p-1\\ 0 \\ \vdots \\ 0 \end{array}
& \begin{array}{c} \cdots \\ \vdots \\ \cdots\\ \cdots \\ \vdots \\ \cdots\end{array}
& \begin{array}{c} p-1 \\ \vdots \\ p-1\\ 0 \\ \vdots \\ 0 \end{array}
& \left.\begin{array}{c} p-1 \\ \vdots \\ p-1\\ 0 \\ \vdots \\ 0 \end{array}\right)
& \begin{array}{c} 1 \\ \vdots \\ \overline{k}\\ \overline{k}+1 \\ \vdots \\ \delta \end{array}
\end{array}$$
Then we will show $ker(C)=\gamma+2\delta-\overline{k}$.
For $\mathbf{c}\in\C$, let $ord(\mathbf{c})$ mean the order of the codeword $\mathbf{c}$, now consider the following two sets $A$ and $B$,
$$A=\{\mathbf{c}|ord(\mathbf{c})=p\}$$ and

$$\begin{aligned}
B=\big\{&\mathbf{c}=\sum_{i=\gamma+\overline{k}+1}^{\gamma+\delta} a_i\mu_i+\sum_{j=1}^{\gamma+\overline{k}} b_j\mu_j\mid a_i\in \Z_{p^2} \text{ and }\exists a_i \in\Z_{p^2}\backslash p\Z_{p^2};\\
&b_j\in \Z_{p} \text{ for } j\in \{1,\ldots,\gamma\} \text{ and } b_j\in p\Z_{p^2} \text{ for } j\in \{\gamma+1,\ldots,\gamma+\overline{k}\}\big\}.
\end{aligned}$$

It is easy to check that the order of $\mathbf{c}\in B$ is $p^2$.
The size of $A$ is $p^{\gamma+\delta}$ and $A\subseteq K(C)$ by Lemma \ref{l:9}.
The size of $B$ is $p^{\gamma+\overline{k}}({p^{2\delta-2\overline{k}}-p^{\delta-\overline{k}}})=p^{\gamma+2\delta-\overline{k}}-p^{\gamma+\delta}$.
Let $T=A\cup B$, therefore, the size of $T$ is $p^{\gamma+2\delta-\overline{k}}$.
By the definition of $\C$, it can be rewritten as
$$\begin{aligned}
\C=\big\{\mathbf{c}^\prime=\sum_{i=\gamma+\overline{k}+1}^{\gamma+\delta} a^\prime_i \mu_i+\sum_{j=1}^{\gamma+\overline{k}} b^\prime_j\mu_j|& a_i^\prime\in \Z_{p^2}; b_j^\prime\in \Z_{p} \text{ for } j\in \{1,\ldots,\gamma\} \text{ and }\\
&b_j^\prime\in \Z_{p^2} \text{ for } j\in \{\gamma+1,\ldots,\gamma+\overline{k}\}\big\}.
\end{aligned}$$
For simplicity, let $\sum_{i=\gamma+\overline{k}+1}^{\gamma+\delta} a_i\mu_i,\sum_{j=1}^{\gamma+\overline{k}} b_j\mu_j,\sum_{i=\gamma+\overline{k}+1}^{\gamma+\delta} a^\prime_i \mu_i,\sum_{j=1}^{\gamma+\overline{k}} b^\prime_j\mu_j$ be $\mathbf{c}_1,\mathbf{c}_2,\mathbf{c}_1^\prime,\mathbf{c}_2^\prime$, respectively.
If $\mathbf{c}=\mathbf{c}_1+\mathbf{c}_2\in B$, then for arbitrary $\mathbf{c}^\prime \in \C$,
we have $pP'(\mathbf{c},\mathbf{c}^\prime)=pP'(\mathbf{c}_1, \mathbf{c}_1^\prime)=\mathbf{0}\in \C$ or the linear combination of $p(p-1)^{-1}\mu_i \in \C$,
where $i\in \{\gamma+\overline{k}+1, \ldots, \gamma+\delta\}$.
Therefore, $B\subseteq K(C)$ by Lemma \ref{l:9}. Thus $T\subseteq K(C)$.

Next we prove that $K(C)= T$. Let $D=\C\backslash T$, and $D$ can be written as follows.
$$\begin{aligned}
D=\big\{\mathbf{d}=\sum_{i=\gamma+\overline{k}+1}^{\gamma+\delta} a_i\mu_i+\sum_{j=1}^{\gamma+\overline{k}} b_j\mu_j\mid &a_i \in\Z_{p^2},
b_j\in \Z_{p} \text{ for } j\in \{1,\ldots,\gamma\} \text{ and }\\
& \exists b_j\in \Z_{p^2}\backslash p\Z_{p^2} \text{ for } j\in \{\gamma+1,\ldots,\gamma+\overline{k}\}\big\}.
\end{aligned}$$
We only consider $s=1$ since all columns of $S_k$ are the same.
For any $\mathbf{d}\in D$, if there exists $ b_{j_1}\in \Z_{p^2}\backslash p\Z_{p^2},j_1\in \{\gamma+1,\ldots,\gamma+\overline{k}\}$, then let $\mathbf{c}^\prime=(p-1)\mu_{j_1}$,
we have $pP'(\mathbf{d},\mathbf{c}^\prime$)=$(\mathbf{0}|0,0,\ldots,0,p(p-1)^{-1},0,\ldots,0)$ or $(\mathbf{0}|p(p-1)^{-1},0,\ldots,0,p(p-1)^{-1},0,\ldots,0)$,
that is, the value of the $(\alpha+1)$-th coordinate is $0$ or $p(p-1)^{-1}$ and the value of the $(\alpha+\beta-\delta+j_1-\gamma)$-th coordinate is $p(p-1)^{-1}$.
Both of these two vectors are not in $\C$ because $(\mathbf{0}|p,0,\ldots,0,p(p-1)^{-1},0,\ldots,0)\in \C$.
Thus there exists a $\mathbf{c}^\prime=p(p-1)^{-1}\mu_{j_1}\in\C$ such that $pP'(\mathbf{d},\mathbf{c}^\prime)\notin \C$  for all $\mathbf{d}\in D$.
Therefore, $D \cap K(C)=\emptyset$, that is, $(\C\backslash T) \cap K(C)=\emptyset$.
Then $T=K(C)$ since $T\subseteq K(C)$.
The assertion (ii) follows.
\end{proof}
\begin{exmpl}
Let $p=3$, and $C$ be a $\Z_3\Z_9$-linear code of type $(\alpha,9;2,5;1)$ with $\ker(C)=12-\overline{k}$. Let $\C=\Phi^{-1}(C)$.
By Theorem \ref{Th:13} (i), $\overline{k}\in\{0,1,2,3,4,5\}$ and $\ker(C)\in\{12,11,10,9,8,7\}$.
Taking the generator matrix of $\C$ as follows:
$$\G_S=\left(\begin{array}{cc|ccc}
1 & T' & \mathbf{0} & 0 & \mathbf{0}\\
0 & \mathbf{0} &\mathbf{0} & 3 & \mathbf{0}\\
\hline
\mathbf{0} & S' & S_k & \mathbf{0} & I_5
\end{array}\right).$$
By Theorem \ref{Th:13} (ii), $S_{12}=(\mathbf{0})$, and $S_{11},S_{10},S_{9},S_{8},S_{7}$ are constructed as follows:
$$
S_{11}=\left(\begin{array}{ccc}
2 & 2 & 2\\
0 & 0 & 0\\
0 & 0 & 0\\
0 & 0 & 0\\
0 & 0 & 0
\end{array}
\right),
S_{10}=\left(\begin{array}{ccc}
2 & 2 & 2\\
2 & 2 & 2\\
0 & 0 & 0\\
0 & 0 & 0\\
0 & 0 & 0
\end{array}
\right),
S_{9}=\left(\begin{array}{ccc}
2 & 2 & 2\\
2 & 2 & 2\\
2 & 2 & 2\\
0 & 0 & 0\\
0 & 0 & 0
\end{array}
\right),$$
$$
S_{8}=\left(\begin{array}{ccc}
2 & 2 & 2\\
2 & 2 & 2\\
2 & 2 & 2\\
2 & 2 & 2\\
0 & 0 & 0
\end{array}
\right),
S_{7}=\left(\begin{array}{ccc}
2 & 2 & 2\\
2 & 2 & 2\\
2 & 2 & 2\\
2 & 2 & 2\\
2 & 2 & 2
\end{array}
\right).
$$
\end{exmpl}
\begin{remark}\label{rem:15}
Considering the construction in Theorem \ref{Th:13}, it is not the only one.
For example, let the matrix
$$S_k=\begin{array}{@{}c@{}c@{}c@{}c@{}c@{}c@{}}
  \left(\begin{array}{c} p-1 \\ \vdots \\ p-1\\ 0 \\ \vdots \\ 0 \end{array}\right.
& \begin{array}{c}  0 \\ \vdots \\ 0\\ 0 \\ \vdots \\ 0 \end{array}
& \begin{array}{c} \cdots \\ \vdots \\ \cdots\\ \cdots \\ \vdots \\ \cdots\end{array}
& \begin{array}{c} 0 \\ \vdots \\ 0\\ 0 \\ \vdots \\ 0 \end{array}
& \left.\begin{array}{c} 0 \\ \vdots \\ 0\\ 0 \\ \vdots \\ 0 \end{array}\right)
& \begin{array}{c} 1 \\ \vdots \\ \overline{k}\\ \overline{k}+1 \\ \vdots \\ \delta \end{array}
\end{array}$$
can also hold, that is, it is enough that there is one column vector which the first $\overline{k}$ coordinates are $2$ and the last $\delta-\overline{k}$ are $0$ in $S_k$.
Other columns just hold the last $\delta-\overline{k}$ are $0$, other coordinates could be any elements in $\Z_9$.
\end{remark}



\section{Pairs of rank and kernel dimension of $\Z_3\Z_{9}$-additive codes}\label{sec:5}

We give the range of the rank of the $\Z_3\Z_{9}$-linear codes if its dimension of the kernel is fixed.
Once a possible pair of values $(r,k)$ is given,
the $\Z_3\Z_{9}$-linear code $C$ of type $(\alpha,\beta;\gamma,\delta;\kappa)$ with $r=rank(C)$ and $k=ker(C)$ can also be constructed.
\begin{lemma}\label{l:15}
Let $\C$ be a $\Z_3\Z_{9}$-additive code of type $(\alpha,\beta;\gamma,\delta;\kappa)$,
and $C=\Phi(\C)$ is the corresponding $\Z_3\Z_{9}$-linear code.
If $rank(C)=\gamma+2\delta+\overline{r}$ and $ker(C)=\gamma+2\delta-\overline{k}$, where $\overline{k}\in\{1,\ldots,\delta\}$,
then $$1\leq\overline{r}\leq \dbinom{\overline{k}}{2}+\dbinom{\overline{k}+2}{3}.$$
Moreover, $\overline{r}=0$ if $\overline{k}=0$.
\end{lemma}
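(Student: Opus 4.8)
The plan is to realize $\overline{r}$ as the dimension of an explicit quotient space and then show that the kernel structure confines all the ``extra'' spanning vectors to products of only $\overline{k}$ generators. Throughout, note that $|C|=3^{\gamma+2\delta}$, so $\dim\langle C\rangle\ge\gamma+2\delta$ with equality precisely when $C=\langle C\rangle$ is linear; hence $\overline{r}=\dim\langle C\rangle-(\gamma+2\delta)\ge 0$. The two endpoints are then immediate: if $\overline{k}=0$ then $ker(C)=\gamma+2\delta$, so $K(C)=C$, the code $C$ is linear, and $\overline{r}=0$; and if $\overline{k}\ge 1$ then $C$ is not linear, so $\langle C\rangle\supsetneq C$ and $\overline{r}\ge 1$. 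It remains to establish the upper bound.

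By Proposition \ref{prop:11} I may take the $\delta$ rows of order $p^2$ to be $\mathbf{v}_1,\ldots,\mathbf{v}_{\overline{k}}$ with $\Phi(\mathbf{v}_i)\notin K(C)$, together with $\omega_1,\ldots,\omega_{\delta-\overline{k}}$ of order $p^2$ with $\Phi(\omega_i)\in K(C)$, and the order $p$ rows $\mathbf{u}_1,\ldots,\mathbf{u}_\gamma$. Let $V_0$ be the $\F_3$-span of the $\gamma+2\delta$ vectors $\Phi(\mathbf{u}_i)$, $\Phi(\mathbf{v}_i)$, $\Phi(\omega_i)$, $\Phi(3\mathbf{v}_i)$, $\Phi(3\omega_i)$; these are linearly independent (by the same argument as in the proof of Theorem \ref{th:7}), lie in $\langle C\rangle$, and their span contains $\Phi(\mathcal{C}_p)$, so $\overline{r}=\dim(\langle C\rangle/V_0)$. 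Applying Theorem \ref{th:5}(i) to this generating set, $\langle C\rangle/V_0$ is spanned by the classes of the products $\Phi(3\mathbf{g}\ast\mathbf{g}')$ and $\Phi(3\mathbf{g}\ast\mathbf{g}'\ast\mathbf{g}'')$, where $\mathbf{g},\mathbf{g}',\mathbf{g}''$ range over $\mathbf{v}_1,\ldots,\mathbf{v}_{\overline{k}},\omega_1,\ldots,\omega_{\delta-\overline{k}}$. Since each $3\mathbf{g}\ast\mathbf{g}'$ is supported on the $Y$-part with entries in $3\Z_9$, it has order dividing $p$ and $\Phi$ is additive on such vectors; hence the induced maps $b(\mathbf{a},\mathbf{b})=\Phi(3\mathbf{a}\ast\mathbf{b})+V_0$ and $t(\mathbf{a},\mathbf{b},\mathbf{c})=\Phi(3\mathbf{a}\ast\mathbf{b}\ast\mathbf{c})+V_0$ are $\F_3$-bilinear and symmetric $\F_3$-trilinear, respectively.

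The crux, and the step I expect to be the main obstacle, is to prove that every product involving some $\omega$ dies in the quotient. Fix an inside-kernel generator $\omega$ and any $\mathbf{w}\in\mathcal{C}$. Comparing the two forms of the order-$p$ defect for $p=3$ gives $3P'(\omega,\mathbf{w})=3Q(\omega,\mathbf{w})$ with $Q(\omega,\mathbf{w})=\omega\ast\mathbf{w}+\omega\ast\omega\ast\mathbf{w}+\omega\ast\mathbf{w}\ast\mathbf{w}$, so the kernel property (Lemma \ref{l:9}) yields $3Q(\omega,\mathbf{w})\in\mathcal{C}$; being of order $p$, it lies in $\mathcal{C}_p$, and $\Phi(\mathcal{C}_p)\subseteq V_0$. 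Expanding $\Phi(3Q(\omega,\mathbf{w}))$ by additivity therefore gives the quotient identity $b(\omega,\mathbf{w})+t(\omega,\omega,\mathbf{w})+t(\omega,\mathbf{w},\mathbf{w})=0$. Substituting $2\mathbf{w}$ for $\mathbf{w}$ and using multilinearity together with $4\equiv 1\bmod 3$ produces a second relation; solving the pair forces $t(\omega,\mathbf{w},\mathbf{w})=0$ and $b(\omega,\mathbf{w})=-t(\omega,\omega,\mathbf{w})$. Polarizing $t(\omega,\mathbf{w},\mathbf{w})=0$, using the symmetry of $t$ and the invertibility of $2$ modulo $3$, gives $t(\omega,\mathbf{w},\mathbf{w}')=0$ for all $\mathbf{w},\mathbf{w}'\in\mathcal{C}$, whence $t(\omega,\omega,\mathbf{w})=0$ and then $b(\omega,\mathbf{w})=0$. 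The difficulty here is exactly that the kernel only controls the combined defect $Q$, so one must disentangle the quadratic and cubic monomials; the scaling-and-polarization device accomplishes this.

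Consequently $\langle C\rangle/V_0$ is spanned solely by the products among $\mathbf{v}_1,\ldots,\mathbf{v}_{\overline{k}}$: the $\dbinom{\overline{k}+1}{2}$ quadratic classes $\Phi(3\mathbf{v}_k\ast\mathbf{v}_l)$ with $l\le k$ and the $\dbinom{\overline{k}+2}{3}$ cubic classes $\Phi(3\mathbf{v}_x\ast\mathbf{v}_y\ast\mathbf{v}_z)$ with $x\le y\le z$. Among the latter, the $\overline{k}$ diagonal ones satisfy $\Phi(3\mathbf{v}_j\ast\mathbf{v}_j\ast\mathbf{v}_j)=\Phi(3\mathbf{v}_j)\in V_0$ and so drop out. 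Therefore $\overline{r}=\dim(\langle C\rangle/V_0)\le\dbinom{\overline{k}+1}{2}+\dbinom{\overline{k}+2}{3}-\overline{k}=\dbinom{\overline{k}}{2}+\dbinom{\overline{k}+2}{3}$, which is the desired bound; combined with the first paragraph this also yields $\overline{r}\ge 1$ when $\overline{k}\ge 1$ and $\overline{r}=0$ when $\overline{k}=0$.
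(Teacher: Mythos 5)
Your proposal is correct and reaches the paper's bound by the same high-level route --- isolate, via Proposition \ref{prop:11}, the $\overline{k}$ order-$p^2$ generators whose Gray images lie outside $K(C)$, and show that only products among those $\overline{k}$ generators can push the rank beyond $\gamma+2\delta$ --- but the execution of the key step is genuinely different. The paper works directly with the coset decomposition $C=\bigcup_{a_i}\big(K(C)+\Phi(\sum_i a_i\mathbf{v}_i)\big)$ and therefore only ever meets the \emph{combined} defect $3(\omega*\mathbf{w}+\omega*\omega*\mathbf{w}+\omega*\mathbf{w}*\mathbf{w})$, which Lemma \ref{l:9} places in $\C$ (indeed in $\C_p$) whenever $\Phi(\omega)\in K(C)$; it never needs to separate the quadratic monomial from the cubic ones, though its argument leans on the explicit description $K(C)=A\cup B$ imported from the construction in Theorem \ref{Th:13}, which is not literally available for an arbitrary code of the given type. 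You instead quotient by an explicit $(\gamma+2\delta)$-dimensional subspace $V_0$ containing $\Phi(\mathcal{C}_p)$, observe that the induced maps $b$ and $t$ are multilinear over $\F_3$, and then disentangle the individual monomials hidden in the combined defect by the substitution $\mathbf{w}\mapsto 2\mathbf{w}$ followed by polarization. That extra step is forced on you because you feed the monomial spanning set of Theorem \ref{th:5}(i) into the quotient, but it buys a cleaner argument that is independent of any particular generator matrix; the arithmetic ($4\equiv 1$ and $2$ invertible modulo $3$) checks out, and your final count $\binom{\overline{k}+1}{2}+\binom{\overline{k}+2}{3}-\overline{k}=\binom{\overline{k}}{2}+\binom{\overline{k}+2}{3}$ agrees with the paper's bound. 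The only soft spot is the appeal to ``the same argument as in Theorem \ref{th:7}'' for the linear independence of the $\gamma+2\delta$ vectors spanning $V_0$, but the paper asserts exactly this independence at the same level of detail, so it is not a gap relative to the source.
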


\begin{proof}If $\overline{k}=0$, then $C$ is linear and $\overline{r}=0$.
It is clear that $\overline{r}\geq 1$ if $\overline{k}\geq 1$.
Let $\{\mathbf{u}_i\}_{i=1}^\gamma$ and $\{\mathbf{v}_j\}_{j=1}^\delta$ be the row vectors of the matrix $\G$ with order $p$ and $p^2$, respectively.
By Proposition \ref{prop:11}, there exists a set $\{\mathbf{v}_1,\mathbf{v}_2,\ldots,\mathbf{v}_{\overline{k}}\}$ of row vectors of order $p^2$ in the generator matrix $\G$,
such that for all $a_i \in \Z_p$,
$C=\bigcup_{a_i\in \Z_p}\bigg(K(C)+\Phi\bigg(\sum_{i=1}^{\overline{k}}a_i\mathbf{v}_i\bigg)\bigg).$
According to Lemma \ref{l:2},
$\Phi\bigg(\sum_{i=1}^{\overline{k}}a_i\mathbf{v}_i\bigg)=\Sigma_{j=1}^{\overline{k}}\Phi(\mathbf{v}_j)+\Sigma_{1\leq l \leq k\leq \overline{k}}(\tilde{a}\Phi(3(\mathbf{v}_k\ast v_l))+\Sigma_{1\leq x \leq y\leq z\leq \overline{k}}(\tilde{b}\Phi(3(\mathbf{v}_x\ast \mathbf{v}_y\ast \mathbf{v}_z))$, $\tilde{a},\tilde{b}\in \Z_3$.

From the proof of Theorem \ref{Th:13} (ii), $K(C)=T=A\cup B$.
The set $A$ can be generated by $\{\Phi(\mathbf{u}_i)\}_{i=1}^\gamma$ and $\{\Phi(3\mathbf{v}_j)\}_{j=1}^\delta$.
$$\begin{aligned}
B=\big\{&\mathbf{c}=\sum_{i=\gamma+\overline{k}+1}^{\gamma+\delta} a_i\mu_i+\sum_{j=1}^{\gamma+\overline{k}} b_j\mu_j\mid a_i\in \Z_{9} \text{ and }\exists a_i \in\Z_{9}\backslash 3\Z_{9};\\
&b_j\in \Z_{3} \text{ for } j\in \{1,\ldots,\gamma\} \text{ and } b_j\in 3\Z_{9} \text{ for } j\in \{\gamma+1,\ldots,\gamma+\overline{k}\}\big\}.
\end{aligned}$$
For any vector $\mathbf{c} \in B$, $\Phi(\mathbf{c})=\Phi\bigg(\sum_{i=\gamma+\overline{k}+1}^{\gamma+\delta} a_i\mu_i+\sum_{j=1}^{\gamma+\overline{k}}b_j\mu_j\bigg)
=\Phi\bigg(\sum_{i=\gamma+\overline{k}+1}^{\gamma+\delta} a_i\mu_i\bigg)+\Phi\bigg(\sum_{j=1}^{\gamma+\overline{k}}b_j\mu_j\bigg)$ by Lemma \ref{l:2}.
The order of $\sum_{j=1}^{\gamma+\overline{k}}b_j\mu_j$ is $3$, thus $\Phi\bigg(\sum_{j=1}^{\gamma+\overline{k}}b_j\mu_j\bigg)$
can also be generated by $\{\Phi(\mathbf{u}_i)\}_{i=1}^\gamma$ and $\{\Phi(3\mathbf{v}_j)\}_{j=1}^\delta$.
By the definition of $\mu_j$, now we only consider  $\Phi\bigg(\sum_{i=\gamma+\overline{k}+1}^{\gamma+\delta} a_i\mu_i\bigg)=\Phi\bigg(\sum_{j=\overline{k}+1}^{\delta} a_j\mathbf{v}_j\bigg), a_j \in\Z_9$.

At first, we will prove that if $k\in\{\overline{k}+1,\ldots,\delta\}$,
then for all $\mathbf{v}\in\C$,
$\Phi(3(\mathbf{v}_k*\mathbf{v}+\mathbf{v}_k*\mathbf{v}*\mathbf{v}+\mathbf{v}_k*\mathbf{v}_k*\mathbf{v}))$
is a linear combination of $\{\Phi(\mathbf{u}_i)\}_{i=1}^\gamma$ and $\{\Phi(3\mathbf{v}_j)\}_{j=1}^\delta$.
The condition
$k\in\{\overline{k}+1,\ldots,\delta\}$ implies that $\Phi(\mathbf{v}_k)\in K(C)$ from the proof of Proposition \ref{prop:11}, that is, for all $\mathbf{v}\in \C$,
$3(\mathbf{v}_k*\mathbf{v}+\mathbf{v}_k*\mathbf{v}*\mathbf{v}+\mathbf{v}_k*\mathbf{v}_k*\mathbf{v})\in \C$ by Lemma \ref{l:9}.
Note that all vectors $\{\mathbf{u}_i\}_{i=1}^\gamma$, $\{3\mathbf{v}_j\}_{j=1}^\delta$ and
$\{3(\mathbf{v}_k*\mathbf{v}+\mathbf{v}_k*\mathbf{v}*\mathbf{v}+\mathbf{v}_k*\mathbf{v}_k*\mathbf{v})|k\in\{\overline{k}+1,\ldots,\delta\}\}$ are of order $3$,
thus $\Phi(3(\mathbf{v}_k*\mathbf{v}+\mathbf{v}_k*\mathbf{v}*\mathbf{v}+\mathbf{v}_k*\mathbf{v}_k*\mathbf{v}))$ can be represented linearly by $\{\Phi(\mathbf{u}_i)\}_{i=1}^\gamma$ and $\{\Phi(3\mathbf{v}_j)\}_{j=1}^\delta$ for $k\in\{\overline{k}+1,\ldots,\delta\}$.

Without loss of generality, let $\delta-\overline{k}=2$, and $j_1=\overline{k}+1, j_2=\overline{k}+2$.
Then $\Phi(a_{j_1}\mathbf{v}_{j_1}+a_{j_2}\mathbf{v}_{j_2})=\Phi(a_{j_1}\mathbf{v}_{j_1})+\Phi(a_{j_2}\mathbf{v}_{j_2})+2\Phi(3(\mathbf{v}_{j_1}*\mathbf{v}_{j_2}+\mathbf{v}_{j_1}*\mathbf{v}_{j_1}*\mathbf{v}_{j_2}
+\mathbf{v}_{j_1}*\mathbf{v}_{j_2}*\mathbf{v}_{j_2}))$.
We know $\Phi(3(\mathbf{v}_{j_1}*\mathbf{v}_{j_2}+\mathbf{v}_{j_1}*\mathbf{v}_{j_1}*\mathbf{v}_{j_2}
+\mathbf{v}_{j_1}*\mathbf{v}_{j_2}*\mathbf{v}_{j_2}))$ is a linear combination of $\{\Phi(\mathbf{u}_i)\}_{i=1}^\gamma$ and $\{\Phi(3\mathbf{v}_j)\}_{j=1}^\delta$.
Since $a_{j_1},a_{j_2}\in \Z_9$, it is easy to check that $\Phi(a_{j_1}\mathbf{v}_{j_1}),\Phi(a_{j_2}\mathbf{v}_{j_2})$ can also be generated by
$\{\Phi(\mathbf{u}_i)\}_{i=1}^\gamma$ and $\{\Phi(3\mathbf{v}_j)\}_{j=1}^\delta$. Consequently, $\langle C\rangle$ can be generated by
$\{\Phi(\mathbf{u}_i)\}_{i=1}^\gamma$, $\{\Phi(\mathbf{v}_j)\}_{j=1}^\delta$, $\Sigma_{1\leq l \leq k\leq \overline{k}}\Phi(3(\mathbf{v}_k\ast v_l))$ and
$\Sigma_{1\leq x \leq y\leq z\leq \overline{k}}\Phi(3(\mathbf{v}_x\ast \mathbf{v}_y\ast \mathbf{v}_z))$.
Hence $\overline{r}\leq \dbinom{\overline{k}}{2}+\dbinom{\overline{k}+2}{3}.$
\end{proof}

According to Lemma \ref{l:15}, $rank(C)\in\bigg\{\gamma+2\delta,\ldots,\gamma+2\delta+\dbinom{\overline{k}}{2}+\dbinom{\overline{k}+2}{3}\bigg\}$,
and we have
$$rank(C)\in \bigg\{\gamma+2\delta,...,\rm{min}\bigg(\beta+\gamma+\kappa,\gamma+2\delta+\dbinom{\overline{k}}{2}+\dbinom{\overline{k}+2}{3}\bigg)\bigg\}.$$

\begin{theorem}\label{th:16}
Let $\alpha,\beta,\gamma,\delta,\kappa$ be positive integers satisfying (\ref{eq:1}).
Then, there is a $\mathbb{Z}_3 \mathbb{Z}_{9}$-linear code $C$ of type $(\alpha,\beta;\gamma,\delta;\kappa)$ with $ker(C)=\gamma+2\delta-\overline{k}$ and $rank(C)=\gamma+2\delta+\overline{r}$ if and only if
$\overline{k}\in\{1,\ldots,\delta\}$ and
$\overline{r}\in \bigg\{1,\ldots,\rm{min}\bigg(\beta-(\gamma-\kappa)-\delta,\dbinom{\overline{k}}{2}+\dbinom{\overline{k}+2}{3}\bigg)\bigg\}$
or $\overline{k}=\overline{r}=0$.
\end{theorem}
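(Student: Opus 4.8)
The plan is to prove the two directions separately, assembling necessity from results already in hand and settling sufficiency by an explicit construction of the block $S_{rk}$.

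For necessity, suppose a code $C$ of the stated type with $ker(C)=\gamma+2\delta-\overline{k}$ and $rank(C)=\gamma+2\delta+\overline{r}$ exists. Theorem \ref{Th:13}(i) already forces $\overline{k}\in\{0,1,\ldots,\delta\}$, and Lemma \ref{l:15} supplies both the equivalence $\overline{k}=0\Leftrightarrow\overline{r}=0$ and, when $\overline{k}\geq 1$, the bounds $1\leq\overline{r}\leq\dbinom{\overline{k}}{2}+\dbinom{\overline{k}+2}{3}$. The only remaining constraint, $\overline{r}\leq\beta-(\gamma-\kappa)-\delta$, is the same structural bound exploited in Theorem \ref{th:7}: by Theorem \ref{th:5}(i) every basis vector of $\langle C\rangle$ beyond the $\gamma+2\delta$ baseline vectors is of the form $\Phi(3\mathbf{v}_a*\mathbf{v}_b)$ or $\Phi(3\mathbf{v}_x*\mathbf{v}_y*\mathbf{v}_z)$, and modulo the baseline its new content lies in the $s=\beta-(\gamma-\kappa)-\delta$ columns of the $S$-block, each of which can host only the one-dimensional $(1,1,1)$-direction of $\phi(3\Z_9)$; hence at most $s$ of these vectors can be independent. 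Intersecting all the bounds yields exactly the stated range.

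For sufficiency I would fix the generator matrix of Theorem \ref{Th:13} (taking $S'=\mathbf{0}$) and engineer $S_{rk}\in\Z_9^{\delta\times s}$ so that precisely $\mathbf{v}_1,\ldots,\mathbf{v}_{\overline{k}}$ become non-kernel while $\mathbf{v}_{\overline{k}+1},\ldots,\mathbf{v}_\delta$ stay in $\K(\C)$, and so that the products among $\mathbf{v}_1,\ldots,\mathbf{v}_{\overline{k}}$ contribute exactly $\overline{r}$ new independent vectors. Since the cross-product counting of Theorem \ref{th:7} needs the separating columns $2\mathbf{h}_j$ present, I split into two regimes. When $\overline{k}\leq\overline{r}\leq\dbinom{\overline{k}}{2}+\dbinom{\overline{k}+2}{3}$, I take $S_{rk}$ to consist of all $\overline{k}$ columns $2\mathbf{h}_1,\ldots,2\mathbf{h}_{\overline{k}}$ together with $\overline{r}-\overline{k}$ columns drawn in order from the restrictions of $\bar{B},\bar{C},\bar{D},\bar{E}$ to the index set $\{1,\ldots,\overline{k}\}$ (nonzero entries only in the first $\overline{k}$ rows), padded with zeros; the rank count is then the argument of Theorem \ref{th:7} verbatim with $\delta$ replaced by $\overline{k}$, and the kernel is controlled as in Theorem \ref{Th:13}. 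When $1\leq\overline{r}\leq\overline{k}$, the columns $2\mathbf{h}_j$ alone would already force $\overline{r}\geq\overline{k}$, so instead I prepend the single collapsing column $\mathbf{w}=(2,\ldots,2,0,\ldots,0)^{\top}$ (with $\overline{k}$ twos), which by Remark \ref{rem:15} makes all of $\mathbf{v}_1,\ldots,\mathbf{v}_{\overline{k}}$ non-kernel at once while, because every product $3\mathbf{v}_a*\mathbf{v}_b$ with $a\neq b$ collapses onto the single coordinate $\mathbf{w}$, contributing just one new dimension; I then append $\overline{r}-1$ columns $2\mathbf{h}_1,\ldots,2\mathbf{h}_{\overline{r}-1}$, each adding one further direction $\Phi(3\mathbf{v}_j^2)$. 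In both regimes $\overline{r}\leq s$ guarantees enough columns are available.

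The routine verifications are the kernel computation (as in Theorem \ref{Th:13}: the unit entries of $\mathbf{v}_i$, $i\leq\overline{k}$, force a carry $3P'$ escaping $\C$, while the all-zero $S_{rk}$-rows of $\mathbf{v}_j$, $j>\overline{k}$, keep these codewords in $\K(\C)$) and the rank computation in the regime $\overline{r}\geq\overline{k}$ (quoting Theorem \ref{th:7}). The main obstacle is the small-$\overline{r}$ regime $1\leq\overline{r}\leq\overline{k}$: I must check that $\mathbf{w}$ really lowers the rank contribution to one while still fixing the kernel deficiency at the full value $\overline{k}$, and that adjoining each $2\mathbf{h}_j$ raises the rank by exactly one despite overlapping $\mathbf{w}$ in the $\mathbf{w}$-coordinate. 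The delicate point is that $\Phi(3\mathbf{v}_j^2)$ is parallel to $\Phi(3\mathbf{v}_j)$ inside the single coordinate $2\mathbf{h}_j$, and its independence is recovered only through the coupling with the block $I_\delta$; confirming that the two regimes agree at $\overline{r}=\overline{k}$ and jointly realize every value up to $\dbinom{\overline{k}}{2}+\dbinom{\overline{k}+2}{3}$ is where the bookkeeping must be done with care.
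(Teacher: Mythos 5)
Your proposal is correct and follows essentially the same route as the paper: necessity from Lemma \ref{l:15} together with the $\beta-(\gamma-\kappa)-\delta$ column bound, and sufficiency by filling $S_{r,k}$ with columns supported on the first $\overline{k}$ rows, drawn one at a time from the matrices $\bar{A'},\ldots,\bar{E'}$ and anchored by the all-$2$'s column of Remark \ref{rem:15} — indeed your small-$\overline{r}$ regime (the collapsing column $\mathbf{w}$ followed by columns $2\mathbf{h}_j$) is literally the paper's construction of $\mathbf{c'}_1$ followed by columns of $\bar{A'}$. The only deviation is that in your regime $\overline{r}\geq\overline{k}$ you drop the all-$2$'s column, so Remark \ref{rem:15} no longer applies verbatim and the kernel claim must be re-verified for every nontrivial $\Z_3$-combination $\sum_{i\le\overline{k}} a_i\mathbf{v}_i$ rather than only for each individual $\mathbf{v}_i$ — a routine check that does go through, but one the paper sidesteps by always keeping $\mathbf{c'}_1$ as the first column.
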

\begin{proof}
Let $\C$ be a $\Z_3\Z_{9}$-additive code of type $(\alpha,\beta;\gamma,\delta;\kappa)$ with generator matrix
$$\mathcal{G}=\left(\begin{array}{cc|ccc}I_\kappa & T' & \mathbf{0} & \mathbf{0} & \mathbf{0}\\
\mathbf{0} & \mathbf{0} & \mathbf{0} & 3I_{\gamma-\kappa} & \mathbf{0}\\
\hline
\mathbf{0} & S' & S_{r,k} & \mathbf{0} & I_\delta
\end{array}\right),$$
where $S_{r,k}$ is a matrix over $\Z_{9}$ of size $\delta\times (\beta-(\gamma-\kappa)-\delta)$,
and $C=\Phi(\C)$ is its corresponding $\Z_3\Z_{9}$-linear code.
The necessary condition is clear by Lemma \ref{l:15}.
Then we will show the sufficient conditions.

If $\overline{k}=0$, then $C$ is a linear code and $\overline{r}=0$.
If $\overline{k}\geq 1$ and $\overline{r}\geq 1$,
let $\mathbf{h'}_k,1\leq k\leq \delta$ be the column vector of length $\delta$, with a $1$ in the $k$-th coordinate and zeros elsewhere,
similar to those matrices
$\bar{A},\bar{B},\bar{C},\bar{D},\bar{E}$ in the proof of Theorem \ref{th:7},
define matrices $\bar{A'}_{\overline{k} \times \overline{k}}$, $\bar{B'},\bar{C'},\bar{D'}$ of size $\overline{k} \times \dbinom{\overline{k}}{2}$
and $\bar{E'}$ of size $\overline{k} \times \dbinom{\overline{k}}{3}$ over $\Z_9$ as follows:
\begin{itemize}
\item $\bar{A'}_{\overline{k}\times \overline{k}}=2I_{\overline{k}}$;
\item $\bar{B'}=(\mathbf{b'}_j)$, where $\mathbf{b'}_j=\mathbf{h'}_k+\mathbf{h'}_l$ for $1\leq k< l\leq \overline{k} , j\in\bigg\{1,2,\ldots, \dbinom{\overline{k}}{2}\bigg\}$,
and $\mathbf{b'}_{j_1}\neq \mathbf{b'}_{j_2}$ if $j_1 \neq j_2$;
\item $\bar{C'}=(\mathbf{c'}_j)$, where $\mathbf{c'}_1=(2,2,\ldots,2)^T$ and $\mathbf{c'}_j=2\mathbf{b'}_j$ for $j\in\bigg\{2,3,\ldots, \dbinom{\overline{k}}{2}\bigg\}$;
\item $\bar{D'}=(\mathbf{d'}_j)$, where $\mathbf{d'}_j=\mathbf{h'}_k+2\mathbf{h'}_l$ (or equivalently, $\mathbf{d'}_j=2\mathbf{h'}_k+\mathbf{h'}_l$) for $1\leq k< l\leq \overline{k} , j\in\bigg\{1,2,\ldots, \dbinom{\overline{k}}{2}\bigg\}$, and $\mathbf{d'}_{j_1}\neq \mathbf{d'}_{j_2}$ if $j_1 \neq j_2$;
\item $\bar{E'}=(\mathbf{e'}_j)$, where $\mathbf{e'}_j=\mathbf{h'}_x+\mathbf{h'}_y+\mathbf{h'}_z$ for $1\leq x< y<z\leq \overline{k}, j\in\bigg\{1,2,\ldots, \dbinom{\overline{k}}{3}\bigg\}$,
and $\mathbf{e'}_{j_1}\neq \mathbf{e'}_{j_2}$ if $j_1 \neq j_2$.
\end{itemize}

Then $S_{r,k}$ can be constructed in the following way.
Divide $S_{r,k}$ into two matrices $S_{1},S_{2}$, where $S_1$ is of size $\overline{k}\times(\beta-(\gamma-\kappa)-\delta)$,
$S_2$ is of size $(\delta-\overline{k})\times(\beta-(\gamma-\kappa)-\delta)$ such that
$S_{r,k}=\begin{pmatrix}S_1\\S_2\end{pmatrix}$, and let $S_2=(\mathbf{0})$.
Let the first column of $S_1$ be $\mathbf{c'}_1=(2,2,\ldots,2)^T$,
which guarantee $ker(C)=\gamma+2\delta-\overline{k}$ by Theorem \ref{Th:13} and Remark \ref{rem:15}.
As to the remaining columns of $S_1$, similar to the proof of Theorem \ref{th:7},
one can select $\overline{r}$ columns from matrices $\bar{A'},\bar{B'},\bar{C'},\bar{D'},\bar{E'}$ in turn.
Since $\mathbf{c'}_1$ is the first column, one can select the other $\dbinom{\overline{k}}{2}-1$ columns from $\bar{C'}$.
By using the same argument as the proof of Theorem \ref{th:7}, $rank(C)=\gamma+2\delta+\overline{r}$. This completes the proof.
\end{proof}

\begin{exmpl}
Considering a $\mathbb{Z}_3 \mathbb{Z}_9$-linear code $C$ of type $(\alpha,\beta;2,6;1)$.
From Theorem \ref{th:16}, $\overline{k}\in\{0,1,2,3,4,5,6\}$ and $\overline{r}\in\bigg\{0,1,\ldots,\rm{min}\bigg(\beta-7,\dbinom{\overline{k}}{2}+\dbinom{\overline{k}+2}{3}\bigg)\bigg\}$.
Let $\beta=14$, the possible dimensions of the kernel $k\in\{8,9,10,11,12,13,14\}$,
and the possible ranks $r\in\{14,15,16,17,18,19,20,21\}$.
The values of $(r,k)$ are shown in Table \ref{tab:1}.
\begin{table}[!htbp]
\centering
\caption{The values of $(r,k)$}\label{tab:1}
\begin{tabular}{|c|c|c|c|c|c|c|c|c|c|}
\hline
\diagbox{$k$}{$r$}  & 14        & 15        & 16        & 17        & 18        & 19        & 20        & 21          \\ \hline
14                  & $\bullet$ &           &           &           &           &           &           &             \\ \hline
13                  &           & $\bullet$ &           &           &           &           &           &             \\ \hline
12                  &           & $\bullet$ & $\bullet$ & $\bullet$ & $\bullet$ & $\bullet$ &           &             \\ \hline
11                  &           & $\bullet$ & $\bullet$ & $\bullet$ & $\bullet$ & $\bullet$ & $\bullet$ & $\bullet$   \\ \hline
10                  &           & $\bullet$ & $\bullet$ & $\bullet$ & $\bullet$ & $\bullet$ & $\bullet$ & $\bullet$   \\ \hline
9                   &           & $\bullet$ & $\bullet$ & $\bullet$ & $\bullet$ & $\bullet$ & $\bullet$ & $\bullet$   \\ \hline
8                   &           & $\bullet$ & $\bullet$ & $\bullet$ & $\bullet$ & $\bullet$ & $\bullet$ & $\bullet$   \\ \hline
\end{tabular}
\end{table}

The generator matrix of $\mathcal{C}=\Phi^{-1}(C)$ is
$$\mathcal{G}=\left(\begin{array}{cc|ccc}1 & T' & \mathbf{0} & 0 & \mathbf{0}\\
0 & \mathbf{0} & 3T_1 & 3 & \mathbf{0}\\
\hline
\mathbf{0} & S' & S_{r,k} & \mathbf{0} & I_4
\end{array}\right).$$

When $\overline{k}=4$, the related matrices are as follows, others can be obtained similarly and be omitted.
\begin{center}
$\bar{A'}=\begin{pmatrix} 2 & 0 & 0 & 0 \\ 0 & 2 & 0 & 0\\ 0 & 0 & 2 & 0\\ 0 & 0 & 0 & 2 \end{pmatrix}$,
$\bar{B'}=\begin{pmatrix} 1 & 1 & 1 & 0 & 0 & 0 \\ 1 & 0 & 0 & 1 & 1 & 0 \\ 0 & 1 & 0 & 0 & 1 & 1 \\ 0 & 0 & 1 & 1 & 0 & 1 \end{pmatrix}$,
$\bar{C'}=\begin{pmatrix} 2 & 2 & 2 & 0 & 0 & 0 \\ 2 & 0 & 0 & 2 & 2 & 0 \\ 2 & 2 & 0 & 0 & 2 & 2 \\ 2 & 0 & 2 & 2 & 0 & 2 \end{pmatrix},$

$\bar{D'}=\begin{pmatrix} 1 & 1 & 1 & 0 & 0 & 0 \\ 2 & 0 & 0 & 1 & 1 & 0 \\ 0 & 2 & 0 & 0 & 2 & 1 \\ 0 & 0 & 2 & 2 & 0 & 2 \end{pmatrix},$
$\bar{E'}=\begin{pmatrix} 1 & 1 & 1 & 0 \\ 1 & 1 & 0 & 1\\ 1 & 0 & 1 & 1\\ 0 & 1 & 1 & 1 \end{pmatrix}$.

$S_{15,10}=\left(
\begin{array}{@{}c@{}|c}  \begin{array}{c}2\\2\\2\\2\\0\\0\end{array}  & \mathbf{0}\end{array}
\right),$
$S_{16,10}=\left(
\begin{array}{@{}c@{}|c}  \begin{array}{cc}2 & 2\\2& 0\\2& 0\\2& 0\\0& 0\\0& 0\end{array}  & \mathbf{0}\end{array}
\right),$
$S_{17,10}=\left(
\begin{array}{@{}c@{}|c}  \begin{array}{ccc}2 & 2 & 0\\2& 0&2\\2& 0& 0\\2& 0& 0\\0& 0& 0\\0& 0& 0\end{array}  & \mathbf{0}\end{array}
\right),$

$S_{18,10}=\left(
\begin{array}{@{}c@{}|c}  \begin{array}{cccc}2 & 2 & 0&0\\2& 0&2&0\\2& 0& 0&2\\2& 0& 0&0\\0& 0& 0&0\\0& 0& 0&0\end{array}  & \mathbf{0}\end{array}
\right),$
$S_{19,10}=\left(
\begin{array}{@{}c@{}|c}  \begin{array}{ccccc}2 & 2 & 0&0&0\\2& 0&2&0&0\\2& 0& 0&2&0\\2& 0& 0&0&2\\0& 0& 0&0&0\\0& 0& 0&0&0\end{array}  & \mathbf{0}\end{array}
\right),$

$S_{20,10}=\left(
\begin{array}{@{}c@{}|c}  \begin{array}{cccccc}2 & 2 & 0&0&0&1\\2& 0&2&0&0&1\\2& 0& 0&2&0&0\\2& 0& 0&0&2&0\\0& 0& 0&0&0&0\\0& 0& 0&0&0&0\end{array}  & \mathbf{0}\end{array}
\right),$
$S_{21,10}=\left(
\begin{array}{ccccccc}2 & 2 & 0&0&0&1&1\\2& 0&2&0&0&1&0\\2& 0& 0&2&0&0&1\\2& 0& 0&0&2&0&0\\0& 0& 0&0&0&0&0\\0& 0& 0&0&0&0&0\end{array}
\right).$
\end{center}

\end{exmpl}

\section{Conclusion}\label{sec:6}

In this paper, we studied the kernel of $\Z_p\Z_{p^2}$-linear codes and the rank of $\Z_3\Z_{9}$-linear codes.
Using combinatorial enumeration techniques, we gave the lower and upper bounds of the rank of $\Z_3\Z_{9}$-linear codes and dimension of the kernel of $\Z_p\Z_{p^2}$-linear codes, respectively.
Moreover, we have constructed $\Z_p\Z_{p^2}$-linear codes for each value of the dimension of the kernel,
and constructed $\Z_3\Z_{9}$-linear codes for each value of the rank.
Finally, we also constructed $\Z_3\Z_{9}$-linear codes for pairs of values of rank and the dimension of the kernel.

\section{Acknowledgement}

The authors are grateful to Professor Denis Krotov's helpful discussion.

\end{document}